%% file: report.tex
\documentclass[12pt]{article}
\input{preamble.ltx}
\usepackage{rotating}

\newif\ifanonymous
\anonymousfalse  

\numberwithin{equation}{section}

\ifanonymous
\else
\fi

\title{\sffamily Subsample-Based Estimation under \\ Dynamic Contamination}

\ifanonymous
\date{Submitted for review}
\else
\author{
\sffamily Yukai Yang\orcidlink{0000-0002-2623-8549}\thanks{Corresponding author: \href{mailto:yukai.yang@statistik.uu.se}{yukai.yang@statistik.uu.se}}
\\
\small Department of Statistics, Uppsala University
\and
\sffamily Rickard Sandberg\orcidlink{0000-0003-0589-4034}
\\
\small Department of Entrepreneurship, Innovation and Technology, \\
\small Center for Data Analytics, Stockholm School of Economics
}
\date{}
\fi

\begin{document}

\pagenumbering{arabic}

\maketitle

\begin{abstract}
This paper studies a structural failure of subsample-based estimation in dynamic time series models. Even under oracle knowledge of contamination locations, removing contaminated observations does not restore the uncontaminated objective. In such settings, contamination propagates through the residual filter and distorts the estimation criterion. As a result, subsample-based estimators are generically inconsistent for the clean-data parameter. We characterise this failure as a structural incompatibility between pointwise subsampling and residual propagation. More generally, the failure arises whenever contamination propagates through transformations that enter the estimation criterion, with dynamic time series models as a leading example. To address it, we propose a propagation-compatible transformation of index sets via a patch removal operator. Under general high-level conditions, this transformation leaves the estimator asymptotically unchanged under the uncontaminated model while restoring consistency under contamination. The results apply to a broad class of residual-based estimators and do not rely on modelling the contamination process.
\end{abstract}

\noindent \textbf{Keywords}: Outliers; Residual propagation; Robustness; Consistency; Asymptotic theory.



\section{Introduction}\label{sec:introduction}

Outliers are a pervasive feature of economic and financial time series, and more generally of data in many statistical applications.
Their primary statistical effect is not merely to generate a few anomalous observations, but to distort inference about the underlying data generating mechanism.
Such contamination may induce spurious statistical significance \citep{Huber1964, Tsay1986, ChenLiu1993, DaviesGather1993} and create false evidence of structural features such as volatility clustering \citep{FransesGhijsels1999, CarneroPenaRuiz2007}, structural breaks, regime switching \citep{Andrews1993, vDFL99}, or persistence and unit-root-like behaviour, even when the clean process exhibits no such features.
From an econometric perspective, the object of interest is typically the latent clean process rather than the contaminated observations themselves.
Outliers act as external perturbations that may mimic or obscure level shifts and variance changes \citep{ChenLiu1993, BoxTiao1975, Tsay1988, FransesHaldrup1994}. 
We refer to such properties of the underlying data-generating mechanism as structural features.
If contamination causes the observed series to appear to exhibit a structural feature that is absent from the clean process, inference based on the observed data is misleading.
Reliable inference therefore requires consistent estimation of the clean model under contamination, since asymptotic approximations, hypothesis testing, model selection, and specification searches all depend on an accurate characterisation of the underlying dynamics \citep{JohansenNielsen2009, DoornikHendry2016}.

A common and widely used approach in this setting is to construct estimators based on subsets of the data intended to exclude contaminated observations. 
Such subsample-based procedures operate without explicit modelling of the contamination mechanism and avoid imposing assumptions on the contamination process itself. 
They form a standard paradigm for robust estimation when the objective is to recover the clean-data mechanism \citep{Huber1964, Ro84, Maronna2006, JohansenNielsen2016}. 
This objective differs fundamentally from approaches that incorporate contamination directly into the data-generating mechanism, where contamination is treated as part of the model and therefore requires additional distributional and structural assumptions for specification and identification.
The present paper adopts the former perspective and treats contamination as an external nuisance rather than as part of the data-generating mechanism. 
Their use in time series, however, implicitly rests on the idea that removing contaminated observations restores the uncontaminated optimisation problem.
As we demonstrate below, contamination propagates through the residual filter that defines the estimation criterion, inducing non-local distortions in the residual sequence and hence in the criterion itself.

We show that subsample-based procedures do not generically provide consistent estimation of the clean-data mechanism in dynamic time series models, even under oracle knowledge of contamination locations. 
In cross-sectional settings, contamination is local, so removing contaminated observations restores the clean-data objective. 
In dynamic models, however, this equivalence fails. 
As a result, robustness cannot be achieved by pointwise removal of contaminated observations, but requires control of propagated distortions in the residual-based criterion.

We establish a general inconsistency result for subsample-based procedures under dynamic contamination, showing that this failure is structural and arises from residual propagation through the filtering mechanism rather than from imperfect identification of contaminated observations. 
We characterise this mechanism by deriving an exact representation of additive and innovative contamination after passage through the residual filter, which induces non-local distortions in the residual-based criterion. 
We show that this implies a propagation-compatible transformation of index sets that removes the residual footprint of contamination, yielding a patch removal operator governed by the model-implied filter rather than by an ad hoc trimming rule and applicable to a broad class of residual-based procedures without modifying their internal structure. 
We establish asymptotic invariance under the transformation in the uncontaminated case and asymptotic equivalence between contaminated and uncontaminated estimators once residual propagation is controlled, thereby restoring consistency for the clean-data parameter under contamination.

Existing work on outliers in time series has focused on their effects on residuals, diagnostics, and detection procedures.
Early work, beginning with \citet{fox_outliers_1972} and later surveyed by \citet{MARTIN1979147} and \citet{MaYo86}, studies additive and innovative outliers and their effects on residuals, diagnostics, and influence functions in dependent data.
Classical model-based approaches include \citet{chang_estimation_1988} and \citet{ChenLiu1993}, who analyse outlier effects and develop procedures for parameter estimation and joint estimation of model and outlier effects in ARMA-type settings.
More recent contributions, particularly \citet{JohansenNielsen2016}, develop a unified asymptotic theory for outlier detection procedures, formalising their behaviour through the notion of \emph{gauge} and establishing calibration results for the fraction of observations flagged as outliers. 

These detection-based procedures are closely related to residual-based screening, down-weighting, and trimming methods, including M-type estimators \citep{Huber1964, WelshRonchetti2002}, least trimmed squares (LTS) and related high-breakdown estimators \citep{Ro84, Visek2006a}, and forward search methods \citep{HadiSimonoff1993, AtkinsonRiani2000}. 
While this provides a rigorous basis for detection and tuning, these procedures operate at the level of individual observations or residual magnitudes. 
They proceed through identification and removal, down-weighting, or trimming of aberrant observations based on residual diagnostics, a well-known difficulty in time series settings \citep{Ljung93}. 
Correct identification or asymptotic calibration of contaminated observations is not sufficient to restore the uncontaminated optimisation problem. 
The difficulty is structural rather than algorithmic. Even when contaminated observations are correctly excluded, residual propagation continues to distort the residual-based criterion through the dynamic filter. 
The resulting optimisation problem therefore no longer targets the clean-data parameter asymptotically.
This limitation also applies to robust estimation for multivariate dynamic models, such as VARMA settings \citep{GarciaBenMartinezYohai1999}, and to approaches based on modified residual constructions that limit the effect of individual outliers \citep{MulerPenaYohai2009}. 
Despite decades of work on outlier detection and robust estimation in time series, 
the question of whether consistency for the clean-data parameter can be restored 
under dynamic contamination without modelling the contamination process has remained open. 
The central issue is therefore not whether contaminated observations can be identified, but whether removing them restores the uncontaminated estimation problem.

\citet{JohansenNielsen2016} emphasise clustered contamination and $\varepsilon$-contamination as important directions for future research. 
We show that even sparse contamination, when combined with temporal dependence, can undermine consistency through residual propagation, and we provide a general correction that restores consistency for the clean-data parameter under high-level conditions that apply broadly across residual-based estimators.

The remainder of the paper is organised as follows. 
Section~\ref{sec:model} introduces the contaminated VARMA framework and characterises residual propagation. 
Section~\ref{sec:estimation} develops the general estimation framework and formalises the propagation-compatible transformation of index sets. 
Section~\ref{sec:verification} discusses the verification of the assumptions and establishes sufficient conditions under which they hold in standard settings.
Section~\ref{sec:simulation} reports simulation evidence. 
Section~\ref{sec:algorithm} presents a feasible implementation. 
Section~\ref{sec:illustration} provides an empirical illustration. 
Section~\ref{sec:conclusion} concludes.

\section{Contamination and Its Residual Propagation} \label{sec:model}

Let $\{x_t\}_{t\in\mathbb Z}$ be a $d$-dimensional stochastic process generated by a vector autoregressive moving average (VARMA) model
\begin{equation} \label{eqn:arma}
\phi_0(L) x_t = \theta_0(L) \varepsilon_t,
\end{equation}
where $\phi_0(L)$ and $\theta_0(L)$ are lag polynomials with orders $p$ and $q$ and $\varepsilon_t$ is a $d$-dimensional zero mean innovation vector.

In this paper, we follow Definitions 3.1.3 and 3.1.4 in \citet{brockwell_time_1991}. 
The VARMA model \eqref{eqn:arma} is said to be causal if and only if it admits the representation
\begin{equation} \label{eqn:inftymv}
x_t = \psi_0(L) \varepsilon_t,
\end{equation}
where $\psi_0(L) = \phi_0^{-1}(L)\theta_0(L)$.
It is said to be invertible if and only if it admits the representation
\begin{equation} \label{eqn:inftyar}
\pi_0(L) x_t = \varepsilon_t,
\end{equation}
where $\pi_0(L) = \theta_0^{-1}(L)\phi_0(L)$.

Let $\mathcal{V}$ denote the parameter space and let $\varphi_0=(\phi_0,\theta_0)\in\mathcal{V}$ denote the true parameter vector of the data generating process.
We restrict the parameter space to values for which the VARMA model is causal and invertible. 
Under these conditions, the associated filters $\psi(L)$ and $\pi(L)$ exist and are absolutely summable.

To focus on the stochastic dynamic structure, we consider a mean-zero VARMA process without deterministic components.
In practice, deterministic terms such as intercepts, trends, seasonal components, or exogenous regressors can be removed by standard preprocessing and are therefore not included in the present formulation.

Outliers are commonly modelled as either additive outliers (AO) or innovative outliers (IO) \citep{fox_outliers_1972, AbrahamBox1979, chang_estimation_1988}.
We begin with the replacement contamination model \citep{denby_robust_1979, bustos_robust_1986, MaYo86}, which takes the form
\begin{equation} \label{eqn:replacement}
y_t = x_t (1 - \delta_t) + \xi_t \delta_t,
\end{equation}
where $\delta_t$ is a binary indicator process taking values in $\{0,1\}$,
and $\xi_t$ denotes an outlying observation whose distribution does not conform to that of $x_t$.
Under this model, the observed series is $y_t$, while $x_t$ remains latent. Given the potentially large magnitude of $\xi_t$, it is convenient to represent contamination in an additive form, leading to
\begin{equation} \label{eqn:additive_outlier}
y_t = x_t + \delta_t \zeta_t,
\end{equation}
where $\zeta_t = \xi_t - x_t$ captures the deviation of the outlier from the latent process.
This specification is referred to as the AO model.

In contrast, under the IO model, the outlier enters through the innovation process and propagates to future observations via the lag structure \citep{fox_outliers_1972, chang_estimation_1988},
\begin{equation} \label{eqn:innovative_outlier}
y_t = x_t + \phi_0^{-1}(L)\,\delta_t \zeta_t.
\end{equation}
In this case, the effect of an outlier is transmitted through the AR dynamics, resulting in persistent deviations over time. Unlike the replacement model \eqref{eqn:replacement}, the IO model therefore allows for time-lagged propagation of contamination.

A key feature of time series models is that contamination at time $t$ affects not only $x_t$, but also the residuals at subsequent times through the dynamic structure of the model.
Let $e_t(\varphi)$ denote the residual computed from the latent uncontaminated process $\{x_t\}$ under a candidate parameter $\varphi \in \mathcal{V}$, so that $e_t(\varphi_0)=\varepsilon_t$.
Let $\tilde e_t(\varphi)$ denote the residual computed from the contaminated observations $\{y_t\}$.
Even when contamination occurs at a single time point, 
the resulting distortion in the residual sequence is generically not local in time.
Because residuals are obtained by filtering the observed series through the model dynamics, contamination entering the data propagates through this filter and affects subsequent residuals.

The following proposition provides an exact characterisation of how contamination propagates through the residual filter.
\begin{proposition}\label{prop:residual}
Under AO contamination, the residual satisfies
\begin{equation}\label{residualao_general}
\tilde e_t(\varphi)
=
e_t(\varphi)
+
\pi(L)\,\delta_t \zeta_t .
\end{equation}
Under IO contamination, the residual satisfies
\begin{equation}\label{residualio_general}
\tilde e_t(\varphi)
=
e_t(\varphi)
+
\pi(L)\phi_0^{-1}(L)\,\delta_t \zeta_t .
\end{equation}
\end{proposition}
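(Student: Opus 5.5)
The plan is to make the residual map explicit and then use its linearity. For a candidate $\varphi=(\phi,\theta)\in\mathcal V$, causality and invertibility give an absolutely summable filter $\pi(L)=\theta^{-1}(L)\phi(L)$. The residual computed from any input series $\{z_t\}$ is then the filtered series $\pi(L)z_t$, so that $e_t(\varphi)=\pi(L)x_t$ and $\tilde e_t(\varphi)=\pi(L)y_t$. Two properties do all the work. First, the map $z\mapsto\pi(L)z$ is linear. Second, for absolutely summable filters, composition commutes and is associative.

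\emph{AO case.} Substituting \eqref{eqn:additive_outlier} and using linearity gives
\[
\tilde e_t(\varphi)=\pi(L)\bigl(x_t+\delta_t\zeta_t\bigr)=\pi(L)x_t+\pi(L)\delta_t\zeta_t=e_t(\varphi)+\pi(L)\,\delta_t\zeta_t ,
\]
which is \eqref{residualao_general}.

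\emph{IO case.} Substituting \eqref{eqn:innovative_outlier} gives
\[
\tilde e_t(\varphi)=\pi(L)x_t+\pi(L)\bigl(\phi_0^{-1}(L)\delta_t\zeta_t\bigr).
\]
It then remains to identify $\pi(L)\bigl(\phi_0^{-1}(L)u_t\bigr)$, with $u_t=\delta_t\zeta_t$, with the composed filter $\pi(L)\phi_0^{-1}(L)$ applied to $u_t$. This uses the fact that $\phi_0^{-1}(L)$ has absolutely summable coefficients, which holds by causality of $\varphi_0$. The product of two absolutely summable matrix power series in $L$ is again absolutely summable, and its coefficients are the Cauchy-product coefficients. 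The double sum $\sum_j\sum_k \pi_j\,\Phi_k\,u_{t-j-k}$, where $\Phi_k$ denotes the coefficients of $\phi_0^{-1}(L)$, may therefore be reordered. In the matrix case the order of multiplication $\pi(L)\phi_0^{-1}(L)$ has to be kept as written, since it is the order in which the filters are applied.

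The main obstacle is justifying these interchanges of infinite sums for the sequence $u_t=\delta_t\zeta_t$. Because $\zeta_t$ may be large and has unspecified distribution, I would not try to rely on moment conditions. The first option is to treat everything pathwise under a sparsity condition in which each realisation has only finitely many, or at least bounded, nonzero $u_t$ in any window. Then each filtered quantity is a finite linear combination of absolutely convergent series, and Fubini/Tonelli for absolutely summable double series applies directly.

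If pathwise boundedness is not available, the fallback is to require $\sup_t E\|u_t\|<\infty$, or the analogous $L^2$ bound. The series then converge almost surely and in $L^1$ by the standard Brockwell--Davis argument (Proposition 3.1.2-type), and the interchange is justified in mean.

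Finally, if residuals are computed recursively from finite-sample initial values rather than with the infinite filter, the identity holds with $\pi(L)$ read as the corresponding truncated filter. I would note this as a remark rather than part of the exact statement.
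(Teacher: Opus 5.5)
Your proposal is correct and follows the paper's own argument. The paper's proof consists only of substituting \eqref{eqn:additive_outlier} and \eqref{eqn:innovative_outlier} into the residual filter $\tilde e_t(\varphi)=\pi(L)y_t$ and using linearity, exactly as you do; your absolute-summability and Fubini justification of $\pi(L)\bigl(\phi_0^{-1}(L)\delta_t\zeta_t\bigr)=\pi(L)\phi_0^{-1}(L)\,\delta_t\zeta_t$ supplies convergence details the paper leaves implicit (and in the matrix case only associativity is needed, not the commutativity you also assert).
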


The proposition characterises how contamination propagates in the residual process. 
This propagation mechanism is distinct from the more commonly studied propagation in the observed series, which we refer to as \emph{level propagation}.
Level propagation refers to how contamination appears in the observed series itself,
while \emph{residual propagation} refers to how the same contamination appears after the residual filter is applied.
The latter is the directly relevant object for any estimator that can be expressed as a function of the residual sequence. 
Consequently, the observed series alone does not reveal the full effect of contamination on estimation.

\begin{sidewaysfigure}
  \centering
  \caption{Uncontaminated and contaminated series in AR(1), MA(1), and ARMA(1,1) models under AO contamination (left panel) and IO contamination (right panel). Within each panel, rows correspond to the different models, the left column displays the observed series, and the right column displays the corresponding residuals computed using the true model parameters. The shaded band marks the time of the outlier.}
  \includegraphics[width=0.49\textwidth]{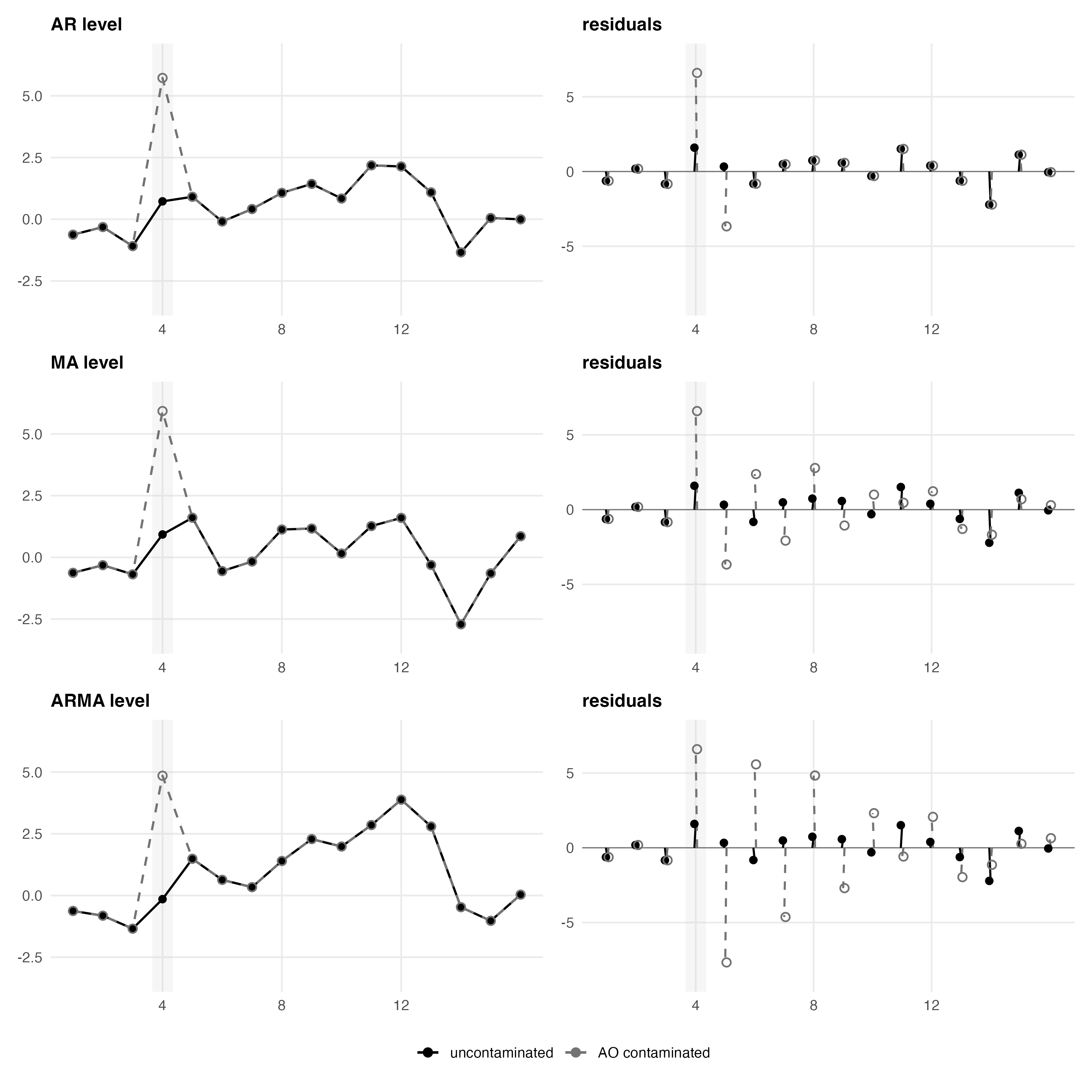}
  \hfill
  \includegraphics[width=0.49\textwidth]{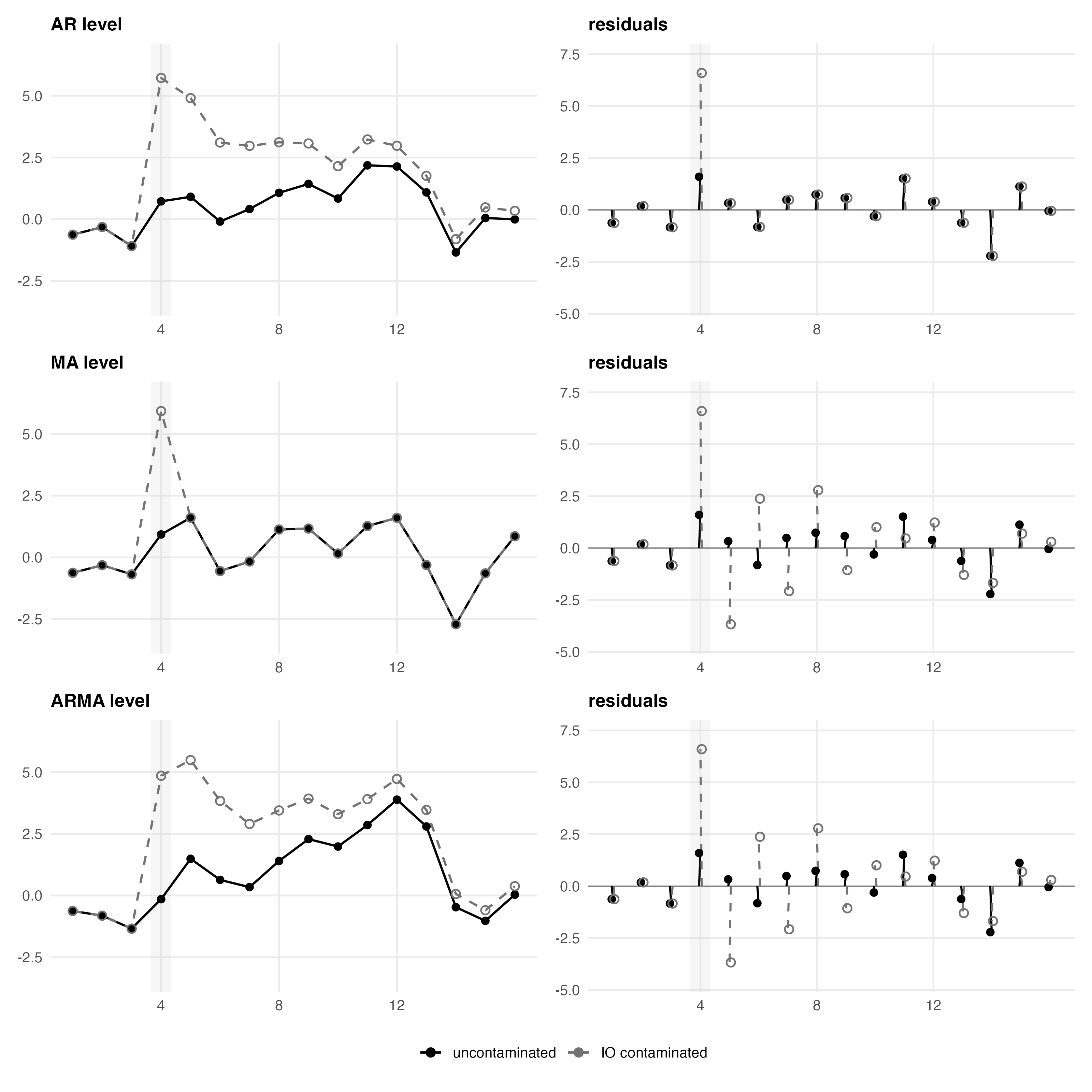}
  \label{fig:AO_IO_combined}
\end{sidewaysfigure}

Figure~\ref{fig:AO_IO_combined} illustrates these propagation mechanisms for AR(1), MA(1), and ARMA(1,1) processes under AO and IO contamination. 
Evaluating the residuals at the true parameter value $\varphi_0$ isolates the contamination mechanism by removing the effect of parameter estimation and provides a benchmark in which parameter uncertainty is absent. 
If residual sequences differ under this benchmark, consistency under contamination cannot be taken for granted.

The propagation patterns described above depend critically on the dynamic structure of the model.
Under the causality and invertibility conditions imposed earlier, the filters $\psi(L)$ and $\pi(L)$ are absolutely summable, so that the effect of a single contamination decays over time in magnitude, even when the associated patch is of infinite length.
This decay property is fundamental for the subsequent analysis, since it implies that residual distortions induced by contamination become asymptotically negligible beyond a sufficiently large horizon.
Without such attenuation, the effect of contamination would persist indefinitely and consistency could not be restored through trimming or patch removal.
Thus, causality and invertibility are not merely technical assumptions, but ensure that contamination propagation through the residual filter remains asymptotically controllable and thereby make consistency restoration possible.

We introduce the notion of patches to formalise the propagation effects described above.
An \emph{outlier patch} corresponds to the consecutive set of time indices affected by the propagation of a single outlier through the time series dynamics.
If the propagation affects the observed series itself, the resulting set of indices is called an \emph{aberrant level patch}, or simply a \emph{level patch}, corresponding to level propagation. 
If the propagation affects the residual sequence, we obtain an \emph{aberrant residual patch}, or \emph{residual patch}, corresponding to residual propagation, defined as a consecutive set of indices for which the residuals do not coincide with the structural errors $e_t(\varphi_0)$ and therefore do not conform to the underlying data-generating mechanism.
In both cases, the patch is generated by dynamic propagation from a single outlier, even though no further outliers are present within the patch itself.

Under AO contamination, the observed series differs from the uncontaminated series only at the outlying observation, so the level effect is local. 
However, the residual filter produces a residual patch in the residual sequence, that is, a set of indices over which the residuals do not coincide with the structural errors $e_t(\varphi_0)$. 
In an AR($p$) model this residual patch has finite length determined by the AR order $p$. 
In contrast, when a moving-average component is present, the inverse moving-average filter generates a residual patch of infinite length.

Under IO contamination the propagation pattern differs. 
In AR models the level effect propagates through the AR dynamics and produces a level patch of infinite length, whereas the residual effect remains local.
In MA models the level effect is local, whereas the residual filter produces a residual patch of infinite length. 
In ARMA models both AR and MA propagation mechanisms operate, so that both level and residual patches are infinite. 
These propagation patterns are summarised in Table~\ref{tab:propagation_patterns}.

\begin{table}[htbp]
\centering
\caption{Propagation patterns and resulting patch lengths}
\label{tab:propagation_patterns}
\begin{tabular}{ccccc}
\hline
Model & AO level & AO residual & IO level & IO residual \\
\hline
AR   & local    & finite    & infinite & local \\
MA   & local    & infinite  & local    & infinite \\
ARMA & local    & infinite  & infinite & infinite \\
\hline
\end{tabular}
\end{table}

Standard robust estimation procedures rely on trimming or down-weighting observations associated with large residuals, see, e.g., \citet{Huber1964, Ro84, Maronna2006}. 
In cross-sectional settings, contamination is local. 
In dynamic time series models, however, Proposition~\ref{prop:residual} shows that contamination propagates through the residual filter and induces non-local distortions in the residual sequence. 
Removing only directly contaminated observations therefore does not eliminate these distortions.

Propagation therefore affects the optimisation problem directly.
Even at the true parameter, contaminated residuals differ from the structural errors underlying the clean criterion, so the usual consistency argument fails.
The figures are evaluated at $\varphi_0$ and therefore represent the most favourable benchmark. 
In practice, $\varphi_0$ is unknown and must be estimated from contaminated data, so the effect of contamination may be more severe.
This motivates the need to control residual propagation in the estimation procedure.

The preceding analysis reveals a structural feature of dynamic contamination.
Residual propagation introduces a non-negligible perturbation into the criterion.
As a consequence, consistency is not a default property and requires additional structure to control the propagated component.
Such control may arise from decay of the underlying filter, sparsity of contamination, or an explicit correction of the retained subset, as developed in the next section.

\section{Subsample-based Estimation under Residual Propagation}\label{sec:estimation}

In this section, we develop a general framework for subsample-based estimation under dynamic contamination. 
The framework applies to a broad class of residual-based subsample estimators and operates at the level of the propagation mechanism. 
The analysis is conducted within a causal and invertible VARMA representation, used as a convenient operator framework in which residual propagation can be characterised explicitly.

As shown in the previous section, contamination propagates through the residual filter and induces non-local distortions in the residual-based criterion. 
This implies a structural requirement for subsample-based estimation in dynamic models.
The retained index set must be compatible with the propagation mechanism, since excluding only directly contaminated observations does not eliminate the resulting residual distortions. 

We formalise this requirement through a criterion-based framework defined on retained subsets of indices. 
Within this framework, we introduce a transformation of index sets that removes the residual footprint of contamination and study the conditions under which it preserves the asymptotic behaviour of the estimator under clean data while restoring consistency under contamination.

\subsection{Residual-based criterion functions}

We consider subsample-based estimators defined through residual-based sample criteria. 
Let $H_T \subset \{1,\dots,T\}$ be a subset of indices with cardinality $h$, and let $\mathcal{H}_T^h$ denote the collection of all such subsets. 
For any $H_T \in \mathcal{H}_T^h$, define
\begin{equation}\label{eq:criterion_subsample}
f_T(\varphi,H_T)
=
g_T\bigl(\{e_t(\varphi):t\in H_T\}\bigr),
\end{equation}
and let $\hat{\varphi}_T(H_T)$ denote a minimiser or maximiser of $f_T(\varphi,H_T)$.
The criterion depends on $\varphi$ only through the residuals $\{e_t(\varphi):t\in H_T\}$.

For simplicity, $\hat{\varphi}_T(H_T)$ is taken to denote an exact optimiser of the sample criterion. 
The same framework can be extended to approximate optimisers that achieve the optimum up to an $o_p(1)$ error, as in the classical theory of extremum estimators; see, for example, \citet{vdV1998}.
The subset $H_T$ may be fixed, data-dependent, or generated by an algorithm. 
The present framework is formulated conditional on a given retained subset $H_T$ and does not depend on how that subset is obtained. 
In particular, the construction of $H_T$ is treated as external to the subsequent estimation problem considered here.
The representation is formulated in terms of the structural residuals $e_t(\varphi)$ associated with the latent process, and thus represents the uncontaminated criterion.

The criterion \eqref{eq:criterion_subsample} encompasses a broad class of estimators used in regression and robust statistics, including least squares \citep{Amemiya1985}, least absolute deviations and quantile regression \citep{KoenkerBassett1978}, maximum likelihood and general M-estimators \citep{Huber1964, NeweyMcFadden1994}, as well as trimming-based procedures such as LTS \citep{Ro84}; see also \citet{Maronna2006}. 
Related residual-based subset selection methods studied by \citet{JohansenNielsen2009,JohansenNielsen2013,JohansenNielsen2016} also fall within this framework.

The analysis is conducted directly at the level of sample criteria and does not rely on the existence of a population objective function or on uniform convergence. 
The key idea is that if the criteria evaluated on two subsets are uniformly close, then, under local stability conditions, the corresponding optimisers are also close. 
This allows us to study subsample transformations without invoking a limiting objective. 
In contrast to standard extremum arguments, which rely on convergence to a deterministic limit with a unique optimiser \citep{Cizek2008}, the present approach operates entirely at the level of the sample criterion.

Under contamination, Proposition~\ref{prop:residual} shows that the residual-based criterion constructed from contaminated observations differs from its uncontaminated counterpart through the propagated residual component.
Unless this propagation effect becomes asymptotically negligible, the contaminated and uncontaminated criteria are not asymptotically equivalent.
Consequently, the corresponding optimisation problems share the same optimiser only under specific conditions for which the propagated residual distortion becomes asymptotically negligible.
Thus, even when the retained subset is asymptotically free of directly contaminated observations, standard subsample-based estimation does not generically yield consistency for the clean-data parameter under dynamic contamination.

\subsection{A propagation-compatible transformation of index sets}

The structural requirement above leads to the following propagation-compatible transformation of retained index sets.
\begin{definition}[Patch removal operator]
For a given integer $\kappa \geq 0$, define the subset transformation
\begin{equation}\label{eqn:patch-removal}
S^\kappa:\, H_T \mapsto 
H_T \setminus
\bigcup_{t \in H_T^c}
\{t+1, \dots, \min(t+\kappa, T)\}.
\end{equation}
This transformation removes, for each excluded index, a forward block of length at most $\kappa$, thereby accounting for the propagation of contamination in the residual sequence.
The transformation $S^\kappa$ induces an operator acting on subsample-based estimators:
\begin{equation}\label{eqn:PROperator1}
K^\kappa:\, \hat{\varphi}_T\, \mapsto\, \hat{\varphi}_T \circ S^\kappa,
\end{equation}
that is,
\begin{equation}\label{eqn:PROperator2}
(K^\kappa \hat{\varphi}_T)(H_T)
=
\hat{\varphi}_T(S^\kappa H_T).
\end{equation}

We refer to $K^\kappa$ as the patch removal operator.
\end{definition}
The transformation $S^\kappa$ is not an additional trimming rule applied for robustness in the usual pointwise sense. 
It is the index-set operation induced by the residual propagation mechanism. 
Once an index is excluded because it may carry contamination, the residual filter implies that later residuals may also be affected. 
A propagation-compatible retained set must therefore exclude the corresponding residual footprint. 
The operator $K^\kappa$ implements this requirement for any subsample-based estimator by composing the original estimator with the transformation $S^\kappa$.

Thus, patch removal operates through the empirical criterion induced by the transformed index set.
In the presence of contamination, it prevents residual patches from entering the criterion. 
Under clean data, it only changes the retained index set and may reduce the effective sample size. 
The invariance result below shows that this cost is asymptotically negligible under suitable stability conditions.

Let $\alpha \in (0,1)$ denote the trimming proportion associated with $H_T$, and let $\kappa$ control the extent of propagation removal. 
Then the transformed subset $S^\kappa H_T$ has cardinality at least $(1 - \alpha(\kappa + 1))T$. 
In the worst case, when residual patches do not overlap, at most $\alpha(\kappa+1)T$ observations are removed. 
In practice, residual patches may overlap, particularly in the presence of consecutive outliers, so that the effective number of removed observations can be smaller.

\subsection{Local stability and asymptotic invariance}

In this subsection, we analyse the optimisation problem defined by the sample criterion $f_T(\varphi, H_T)$ in terms of the structural residuals $e_t(\varphi)$, corresponding to the uncontaminated formulation. 
The analysis is conducted directly at the level of sample criteria and does not rely on convergence to a limiting objective function or differentiability.

The first set of conditions imposes local regularity of the sample criterion around its optimiser and captures properties that arise asymptotically for subsample-based estimators. 
In particular, it ensures separation of the optimiser and local sharpness of the criterion in its neighbourhood.

\begin{assumption}[Local optimisation stability]\label{ass:local_stab}
There exists an open set $\mathcal N \subset \mathcal V$ such that, as $T \to \infty$, with probability tending to one, the following hold.

\noindent
(i) (Local separation)  
There exists $\delta>0$ such that, in the case of minimisation,
\begin{equation}
\inf_{\varphi\notin\mathcal N} f_T(\varphi,H_T)
\ge
\inf_{\varphi\in\mathcal N} f_T(\varphi,H_T) + \delta,
\end{equation}
or, in the case of maximisation,
\begin{equation}
\sup_{\varphi\notin\mathcal N} f_T(\varphi,H_T)
\le
\sup_{\varphi\in\mathcal N} f_T(\varphi,H_T) - \delta.
\end{equation}

\noindent
(ii) (Local sharpness)  
For every $\varepsilon > 0$, there exists $\eta_\varepsilon > 0$ such that,
in the case of minimisation,
\begin{equation}
\inf_{\substack{\varphi \in \mathcal N:\\
\|\varphi-\hat{\varphi}_T(H_T)\|\ge \varepsilon}}
f_T(\varphi,H_T)
\ge
f_T(\hat{\varphi}_T(H_T),H_T) + \eta_\varepsilon,
\end{equation}
or, in the case of maximisation,
\begin{equation}
\sup_{\substack{\varphi \in \mathcal N:\\
\|\varphi-\hat{\varphi}_T(H_T)\|\ge \varepsilon}}
f_T(\varphi,H_T)
\le
f_T(\hat{\varphi}_T(H_T),H_T) - \eta_\varepsilon.
\end{equation}
\end{assumption}

Assumption~\ref{ass:local_stab} formalises local identification and curvature properties of the sample criterion. 
It is analogous to standard conditions used in extremum estimation, and can be verified under conventional sufficient conditions, such as uniform convergence and uniqueness of a limiting objective. 
However, the present framework does not rely on the existence of such a limiting objective and instead operates directly at the level of the sample criterion.

\begin{assumption}[Stability under patch removal]\label{ass:criterion_stability}
The criterion function is uniformly stable under patch removal, that is,
\begin{equation}
\sup_{\varphi\in\mathcal V}
\bigl|
f_T(\varphi,H_T)-f_T(\varphi,S^\kappa H_T)
\bigr|
\overset{p}{\longrightarrow}0.
\end{equation}
\end{assumption}

Assumption~\ref{ass:criterion_stability} depends on how the criterion aggregates residuals. 
Proposition~\ref{prop:patch_stability} provides sufficient conditions illustrating that the high-level condition holds for a wide class of residual-based criteria, in particular when the fraction of removed observations is asymptotically negligible and the criterion contributions are suitably controlled.

Together, Assumptions~\ref{ass:local_stab} and \ref{ass:criterion_stability} ensure that patch removal does not alter the optimisation problem locally. 
Local separation restricts attention to a neighbourhood of the optimiser, stability controls the perturbation induced by the transformation, and sharpness ensures that such perturbations do not shift the optimiser.

The following theorem establishes that, under the stated assumptions, local optimisation stability is preserved under patch removal.
In particular, the criterion based on $S^\kappa H_T$ satisfies the same local separation and sharpness properties as the original criterion.
Furthermore, the patch removal operator is asymptotically invariant, in the sense that the resulting estimator is asymptotically equivalent to the original one.

\begin{theorem}[Stability and invariance under patch removal]
\label{thm:patch_invariance}
Suppose that the data are generated from the uncontaminated model \eqref{eqn:arma}, 
and that Assumptions~\ref{ass:local_stab} and \ref{ass:criterion_stability} hold.
Then, with probability tending to one, the criterion $f_T(\varphi,S^\kappa H_T)$ satisfies the following on the same open set $\mathcal N$.

\noindent
(i) (Transferred local separation)
There exists $\delta'>0$ such that, in the case of minimisation,
\begin{equation}
\inf_{\varphi\notin\mathcal N} f_T(\varphi,S^\kappa H_T)
\ge
\inf_{\varphi\in\mathcal N} f_T(\varphi,S^\kappa H_T)+\delta',
\end{equation}
or, in the case of maximisation,
\begin{equation}
\sup_{\varphi\notin\mathcal N} f_T(\varphi,S^\kappa H_T)
\le
\sup_{\varphi\in\mathcal N} f_T(\varphi,S^\kappa H_T)-\delta'.
\end{equation}

\noindent
(ii) (Transferred local sharpness)
For every $\varepsilon>0$, there exists $\eta'_\varepsilon>0$ such that, in the case of minimisation,
\begin{equation}
\inf_{\substack{\varphi\in\mathcal N:\\
\|\varphi-\hat{\varphi}_T(S^\kappa H_T)\|\ge\varepsilon}}
f_T(\varphi,S^\kappa H_T)
\ge
f_T(\hat{\varphi}_T(S^\kappa H_T),S^\kappa H_T)+\eta'_\varepsilon,
\end{equation}
or, in the case of maximisation,
\begin{equation}
\sup_{\substack{\varphi\in\mathcal N:\\
\|\varphi-\hat{\varphi}_T(S^\kappa H_T)\|\ge\varepsilon}}
f_T(\varphi,S^\kappa H_T)
\le
f_T(\hat{\varphi}_T(S^\kappa H_T),S^\kappa H_T)-\eta'_\varepsilon.
\end{equation}

Furthermore, we have
\begin{equation}
(K^\kappa \hat{\varphi}_T)(H_T) - \hat{\varphi}_T(H_T)
\overset{p}{\longrightarrow}0.
\end{equation}
\end{theorem}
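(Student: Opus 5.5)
Write $\Delta_T=\sup_{\varphi\in\mathcal V}|f_T(\varphi,H_T)-f_T(\varphi,S^\kappa H_T)|$. By Assumption~\ref{ass:criterion_stability}, $\Delta_T\overset{p}{\to}0$, so for any fixed $c>0$ the event $\{\Delta_T<c\}$ has probability tending to one. We intersect it with the event on which Assumption~\ref{ass:local_stab} holds. The whole argument is then deterministic on this intersection, using perturbation bounds for infima and suprema. We treat minimisation; maximisation follows by replacing $f_T$ with $-f_T$.

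\emph{Transferred separation.} Infima over any set move by at most $\Delta_T$ when the criterion is changed. Hence
\[
\inf_{\varphi\notin\mathcal N} f_T(\varphi,S^\kappa H_T)\ \ge\ \inf_{\varphi\notin\mathcal N} f_T(\varphi,H_T)-\Delta_T\ \ge\ \inf_{\varphi\in\mathcal N} f_T(\varphi,H_T)+\delta-\Delta_T\ \ge\ \inf_{\varphi\in\mathcal N} f_T(\varphi,S^\kappa H_T)+\delta-2\Delta_T .
\]
On $\{\Delta_T<\delta/4\}$ this gives (i) with $\delta'=\delta/2$. A consequence is that the minimiser $\hat\varphi_T(S^\kappa H_T)$ lies in $\mathcal N$; the same holds for $\hat\varphi_T(H_T)$ by Assumption~\ref{ass:local_stab}(i). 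Both are exact minimisers, since the paper treats them as exact optimisers.

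\emph{Consistency step.} Take any $\varepsilon>0$ and suppose $\|\hat\varphi_T(S^\kappa H_T)-\hat\varphi_T(H_T)\|\ge\varepsilon$. Local sharpness of the original criterion, applied at the point $\hat\varphi_T(S^\kappa H_T)\in\mathcal N$, gives
\[
f_T(\hat\varphi_T(S^\kappa H_T),H_T)\ \ge\ f_T(\hat\varphi_T(H_T),H_T)+\eta_\varepsilon .
\]
Optimality on the transformed set gives
\[
f_T(\hat\varphi_T(S^\kappa H_T),S^\kappa H_T)\ \le\ f_T(\hat\varphi_T(H_T),S^\kappa H_T).
\]
Applying the $\Delta_T$ bound twice to the first inequality yields $f_T(\hat\varphi_T(S^\kappa H_T),S^\kappa H_T)\ge f_T(\hat\varphi_T(H_T),S^\kappa H_T)+\eta_\varepsilon-2\Delta_T$. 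This contradicts the second inequality on $\{\Delta_T<\eta_\varepsilon/2\}$. Therefore $P(\|(K^\kappa\hat\varphi_T)(H_T)-\hat\varphi_T(H_T)\|\ge\varepsilon)\to0$, which is the invariance claim.

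\emph{Transferred sharpness.} This is the step I expect to be the main obstacle, because the sharpness in (ii) is centred at the new minimiser rather than the old one. I would use the triangle inequality. Fix $\varepsilon$, and work on the high-probability event where $\|\hat\varphi_T(S^\kappa H_T)-\hat\varphi_T(H_T)\|<\varepsilon/2$ (from the consistency step) and $\Delta_T$ is small. Any $\varphi\in\mathcal N$ with $\|\varphi-\hat\varphi_T(S^\kappa H_T)\|\ge\varepsilon$ then satisfies $\|\varphi-\hat\varphi_T(H_T)\|\ge\varepsilon/2$, so
\[
f_T(\varphi,S^\kappa H_T)\ \ge\ f_T(\hat\varphi_T(H_T),H_T)+\eta_{\varepsilon/2}-\Delta_T .
\]
Next we bound the old minimum value against the new one:
\[
f_T(\hat\varphi_T(H_T),H_T)\ \ge\ f_T(\hat\varphi_T(H_T),S^\kappa H_T)-\Delta_T\ \ge\ f_T(\hat\varphi_T(S^\kappa H_T),S^\kappa H_T)-\Delta_T .
\]
Combining the two gives (ii) with $\eta'_\varepsilon=\eta_{\varepsilon/2}/2$ on $\{\Delta_T<\eta_{\varepsilon/4}/4\}$, which has probability tending to one.

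The only delicate point is the quantifier order: statement (ii) is for every $\varepsilon$ with probability tending to one. I would therefore read (ii) as holding for each fixed $\varepsilon$ on an event whose probability tends to one, with that event allowed to depend on $\varepsilon$. This matches the reading implicit in Assumption~\ref{ass:local_stab}.
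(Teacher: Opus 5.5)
Your proof is correct and follows the paper's argument essentially step for step. It uses the same $\Delta_T$, the same infimum-perturbation chain giving $\delta'=\delta/2$, the same contradiction argument for $\hat{\varphi}_T(S^\kappa H_T)-\hat{\varphi}_T(H_T)\overset{p}{\longrightarrow}0$ (you apply sharpness at $\varepsilon$ where the paper uses $\varepsilon/2$, which is immaterial), and the same triangle-inequality step yielding $\eta'_\varepsilon=\eta_{\varepsilon/2}/2$. The only slip is cosmetic: in the last step the event should be $\{\Delta_T<\eta_{\varepsilon/2}/4\}$, intersected with the consistency event at $\varepsilon/2$, not $\{\Delta_T<\eta_{\varepsilon/4}/4\}$.
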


An immediate consequence is that consistency is preserved under patch removal.

\begin{corollary}[Consistency under patch removal]\label{cor:patch_consistency}
Under the conditions of Theorem~\ref{thm:patch_invariance},
if
$\hat{\varphi}_T(H_T)\overset{p}{\longrightarrow}\varphi_0$,
then
\begin{equation}
(K^\kappa \hat{\varphi}_T)(H_T)\overset{p}{\longrightarrow}\varphi_0.
\end{equation}
\end{corollary}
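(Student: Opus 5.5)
The corollary follows from Theorem~\ref{thm:patch_invariance} by a triangle inequality together with the algebra of convergence in probability. The plan is to decompose
\begin{equation*}
(K^\kappa \hat{\varphi}_T)(H_T) - \varphi_0
=
\bigl[(K^\kappa \hat{\varphi}_T)(H_T) - \hat{\varphi}_T(H_T)\bigr]
+
\bigl[\hat{\varphi}_T(H_T) - \varphi_0\bigr],
\end{equation*}
and to show that each bracket is $o_p(1)$.

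The first bracket converges to zero in probability by the final display of Theorem~\ref{thm:patch_invariance}. This is exactly where the hypotheses of the corollary enter: the uncontaminated model together with Assumptions~\ref{ass:local_stab} and~\ref{ass:criterion_stability}. The second bracket is $o_p(1)$ by the hypothesis $\hat{\varphi}_T(H_T)\overset{p}{\longrightarrow}\varphi_0$.

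To combine them, fix $\varepsilon>0$. Then
\begin{equation*}
P\bigl(\|(K^\kappa \hat{\varphi}_T)(H_T)-\varphi_0\|>\varepsilon\bigr)
\le
P\bigl(\|(K^\kappa \hat{\varphi}_T)(H_T)-\hat{\varphi}_T(H_T)\|>\varepsilon/2\bigr)
+
P\bigl(\|\hat{\varphi}_T(H_T)-\varphi_0\|>\varepsilon/2\bigr).
\end{equation*}
Both terms on the right tend to zero, so the left-hand side does too, which gives the claim.

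There is no real obstacle here. The only point needing care is that the two estimators are defined on the same probability space for each $T$ and with the same norm. The convergence in Theorem~\ref{thm:patch_invariance} is stated for the difference of the two estimators, and where they are defined only with probability tending to one, the exceptional events can be absorbed into the probability bound above.
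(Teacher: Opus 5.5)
Your proof is correct and takes the same approach as the paper, which simply combines the final display of Theorem~\ref{thm:patch_invariance} with the hypothesis $\hat{\varphi}_T(H_T)\overset{p}{\longrightarrow}\varphi_0$. Your triangle-inequality decomposition and the $\varepsilon/2$ union bound just write out the step the paper leaves implicit.
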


\subsection{Contamination and propagation control}\label{sec:contprop}

We now turn to the contaminated setting and study how contamination affects residual-based criteria. 
We begin with assumptions on the contamination process and then introduce additional regularity conditions on the criterion function.

The purpose of this subsection is not to treat contamination as a structural feature of the data-generating mechanism, but to isolate the conditions under which its effect on the residual-based criterion can be controlled. 
Structural features of the clean stochastic mechanism should not be confused with contamination.
In particular, even when the clean process has no unit root, contamination may generate spurious persistence or unit-root-like behaviour in the observed data. 
Such behaviour does not reflect genuine propagation through the clean model, but rather distortion induced by contamination in the residual sequence.

For this reason, the relevant issue here is not stationarity in a purely descriptive sense, but whether the propagation of contamination through the residual filter can be sufficiently attenuated by patch removal. 
In the present VARMA framework, this is achieved through conditions that yield decay of the propagation effect.
The key requirement is therefore not whiteness of the errors per se, but the existence of a representation under which contamination enters through a propagation mechanism whose effect can be controlled. 
The analysis relies on structural and regularity conditions that ensure that the residual representation is well defined and that this propagation mechanism is suitably attenuated.

\begin{assumption}[Contamination rate]\label{ass:outlier_rate}
There exists $\bar{\alpha} \in [0,1/2)$ such that
\begin{equation}
\limsup_{T\to\infty} T^{-1}\sum_{t=1}^T \delta_t \le \bar{\alpha}.
\end{equation}
\end{assumption}

To ensure that the trimming step is capable of removing contaminated observations asymptotically, the trimming proportion $\alpha$ is assumed to satisfy $\alpha \ge \bar{\alpha}$. 
The restriction $\bar{\alpha}<1/2$ ensures that contaminated observations form a minority of the sample and is standard in robust statistics; see, for example, \citet{Ro84}.

To isolate the effect of residual propagation from the separate problem of outlier detection, we impose the following high-level condition.
In many applications, extreme observations are readily identifiable in the observed series. 
Contamination propagates through the dynamic filter and distorts the residual-based criterion.
\begin{assumption}[Asymptotic absence of contamination]\label{as:ultimate}
\begin{equation}
\Pr\big( \delta_t = 0 \text{ for all } t \in H_T \big) \longrightarrow 1,
\end{equation}
as $T \to \infty$.
\end{assumption}

A sufficient condition, which arises in trimming-based procedures, is that, for sufficiently large $T$,
$\min_{t \in \boldsymbol{\tau}} |\tilde{e}_t(\varphi_0)|
>
\max_{t \notin \boldsymbol{\tau}} |\tilde{e}_t(\varphi_0)|$,
where $\boldsymbol{\tau} = \{t : \delta_t = 1\}$ denotes the set of contaminated indices. 
This enforces strict separation between contaminated and uncontaminated residuals and implies exact classification.

Assumption~\ref{as:ultimate} is substantially weaker. 
It does not require separation of residual magnitudes or exact identification of individual observations. 
Instead, it imposes a one-sided requirement that contaminated observations are asymptotically excluded from the selected subset $H_T$, while uncontaminated observations may be discarded.
Thus false positives are allowed, but false negatives are ruled out with probability tending to one. 
This reflects a conservative selection principle, under which it is preferable to discard additional clean observations in order to eliminate the influence of contamination.

The present paper treats the retained subset $H_T$ as given.
Assumption~\ref{as:ultimate} is imposed as a high-level condition so that the analysis can focus on contamination and residual propagation, rather than on the separate problem of identifying contaminated observations.
Importantly, this condition represents a best-case benchmark. 
Even under this asymptotically uncontaminated subset, standard subsample-based estimation fails in dynamic models, as shown below.

The choice of $\kappa$ is central to the effectiveness of patch removal. 
A common trimming length $\kappa$ provides a uniform upper bound on the propagation of contamination across observations. 
Although the effective propagation length may vary across outliers, the analysis requires only such a bound and does not depend on the exact propagation profile of individual contamination events.

For a causal and invertible VARMA model \eqref{eqn:arma}, the coefficients in the representation $\pi(L)$ decay exponentially fast. 
That is, there exist $M>0$ and $R\in(0,1)$ such that $|\pi_j|\le M R^j$ for all $j$; see \citet[Theorem 11.3.2]{brockwell_time_1991}. 
This implies geometric decay of the contamination effect in the residual sequence. 
Consequently, by choosing $\kappa$ sufficiently large, the residual distortion becomes asymptotically negligible on the retained subset.

If the propagation effect does not decay, as in unit-root-type settings, finite patch removal cannot eliminate the residual footprint of contamination.

\begin{assumption}[Contamination decay condition]\label{ass:contamination_decay}
The contamination magnitude, rate, and trimming level satisfy
\begin{equation}\label{eqn:kappa-condition}
\alpha T \,\|\delta_T \zeta_T\|_\infty \, R^\kappa \longrightarrow 0
\end{equation}
as $T \to \infty$.
\end{assumption}

Assumption~\ref{ass:contamination_decay} can be satisfied by an appropriate choice of $\kappa$. 
Since $R\in(0,1)$, the factor $R^\kappa$ decays exponentially in $\kappa$, whereas $\alpha T\|\delta_T\zeta_T\|_\infty$ may grow with $T$. 
Under standard breakdown-type formulations, $\|\delta_T \zeta_T\|_\infty$ is not assumed to be bounded. 
It therefore suffices for $\kappa$ to increase at a logarithmic rate in $\alpha T\|\delta_T\zeta_T\|_\infty$, so that the exponential decay offsets the combined effect of contamination frequency and magnitude. 
Assumption~\ref{ass:contamination_decay} thus ensures that the residual effect of contamination becomes asymptotically negligible after patch removal.

At the same time, $\kappa$ is constrained by the feasibility condition
\begin{equation}
\alpha(\kappa+1) < 1-c,
\end{equation}
for some constant $c\in(0,1)$, which ensures that the retained subset has cardinality of order $T$. 
If $\kappa$ increases with $T$, the trimming proportion $\alpha$ may need to decrease accordingly, corresponding to settings with a small, possibly vanishing, contamination proportion.

In the VAR($p$) case, propagation has finite memory and is fully eliminated by choosing $\kappa=p$. 
Consequently, once the initial subset is asymptotically free of contamination, no further distortion remains in the residual sequence on the retained subset, and Assumption~\ref{ass:contamination_decay} is no longer required. 
This simplification is formalised in Theorem~\ref{thm:invariance} below.

The assumptions introduced so far concern the contamination mechanism and its interaction with the retained subset. 
We now impose additional regularity conditions on the criterion function itself. 
These conditions are conceptually distinct from the contamination assumptions above. 
They are imposed on the clean residual-based representation and are applied after the effect of contamination has been attenuated on the retained subset. 
Their role is to control the effect of residual perturbations on the criterion.
They are not required for Theorem~\ref{thm:patch_invariance} and Corollary~\ref{cor:patch_consistency}, which are established under the uncontaminated setting.

\begin{assumption}[Criterion regularity]\label{ass:criterion_lipschitz}
For every subset $H_T \in \mathcal H_T^h$, the following conditions hold:

\noindent
(a) The criterion admits the representation
\begin{equation}
f_T(\varphi,H_T)
=
\frac{1}{|H_T|}
\sum_{t\in H_T} m_t\bigl(e_t(\varphi)\bigr).
\end{equation}

\noindent
(b) The functions $m_t:\mathbb R \to \mathbb R$ satisfy
\begin{equation}
|m_t(y)-m_t(x)|
\le
C\bigl(1+|x|+|y-x|\bigr)|y-x|,
\end{equation}
for all $x,y\in\mathbb R$, uniformly in $t$, for some constant $C<\infty$.

\noindent
(c)
\begin{equation}
\sup_{\varphi\in\mathcal V}
\frac{1}{|S^\kappa H_T|}
\sum_{t\in S^\kappa H_T}|e_t(\varphi)|^2
= O_p(1).
\end{equation}
\end{assumption}

Assumption~\ref{ass:criterion_lipschitz} imposes a regularity condition on the criterion function conditional on a given retained subset $H_T$. 
Part (a) restricts attention to criteria that can be written as averages of observation-wise contributions over $H_T$. 
This is standard in subsample-based estimation and is a special case of \eqref{eq:criterion_subsample}, since the criterion depends on $\varphi$ only through the residuals $\{e_t(\varphi): t \in H_T\}$. 
Part (b) imposes a uniform increment condition on the functions $m_t$, requiring that differences in the criterion contributions are controlled linearly in the size of the perturbation, with a coefficient growing at most linearly in the magnitude of the argument. 
This accommodates a broad class of smooth and non-smooth loss functions, including general convex and non-convex M-estimation criteria. 
Part (c) requires that, after patch removal, the average squared residuals over $S^\kappa H_T$ are stochastically bounded uniformly over $\varphi \in \mathcal V$. 
This does not impose bounded residuals, nor does it require the same property to hold over the full sample or the original subset $H_T$. 
It allows for occasional large values and is compatible with heteroskedasticity.

Taken together, these conditions are weak enough to cover most commonly used criterion functions in time series and econometrics, while ensuring that perturbations in the residuals induce controlled perturbations of the criterion. 
The assumption operates directly at the level of criterion increments, rather than imposing unnecessary smoothness conditions.

Let $\tilde{\varphi}_T(H_T)$ denote an estimator obtained by optimising a sample criterion function of the form
\begin{equation}\label{eq:criterion_consubsample}
\tilde f_T(\varphi,H_T)
=
g_T\bigl(\{\tilde e_t(\varphi):t\in H_T\}\bigr),
\end{equation}
where, as in \eqref{eq:criterion_subsample}, the criterion depends on $\varphi$ only through the contaminated residuals $\{\tilde e_t(\varphi):t\in H_T\}$.

\begin{theorem}[Invariance under contamination]
\label{thm:invariance}
Suppose that Assumptions~\ref{ass:local_stab}--\ref{ass:criterion_lipschitz} hold.
Then, under AO or IO contamination in model \eqref{eqn:arma},
\begin{equation}
(K^\kappa \tilde{\varphi}_T)(H_T) - (K^\kappa \hat{\varphi}_T)(H_T)\overset{p}{\longrightarrow}0.
\end{equation}
Moreover, in the VAR($p$) case with $\kappa = p < \infty$, the result holds without Assumption~\ref{ass:contamination_decay}.
\end{theorem}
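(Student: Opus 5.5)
The plan is to compare the two sample criteria on the transformed subset $S^\kappa H_T$: the contaminated one $\tilde f_T(\varphi,S^\kappa H_T)$ and the clean one $f_T(\varphi,S^\kappa H_T)$. We show they are uniformly close over $\mathcal V$. Then we transfer the optimiser using the local separation and sharpness properties of the clean criterion on $S^\kappa H_T$, which Theorem~\ref{thm:patch_invariance} supplies. The argument has three steps.

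\emph{Step 1: residual perturbation on the retained set.} By Proposition~\ref{prop:residual}, for every $t$ we can write $\tilde e_t(\varphi)-e_t(\varphi)=r_t(\varphi)$. Here $r_t(\varphi)=\sum_{j\ge0}\pi_j(\varphi)\,\delta_{t-j}\zeta_{t-j}$ under AO, and the analogous expression with the combined filter $\pi(L)\phi_0^{-1}(L)$ under IO. On the event of Assumption~\ref{as:ultimate}, every contaminated index lies in $H_T^c$. By the definition of $S^\kappa$, any $t\in S^\kappa H_T$ then satisfies $\delta_{t-j}=0$ for $j=0,\dots,\kappa$. Hence only lags $j>\kappa$ contribute to $r_t$. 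Using the geometric bound $|\pi_j|\le MR^j$ (uniform over a compact parameter set, with the combined filter under IO also geometrically decaying), we obtain
\[
\sup_{\varphi}\,|r_t(\varphi)|\le M\,\|\delta_T\zeta_T\|_\infty \sum_{j>\kappa,\ \delta_{t-j}=1} R^{j}.
\]
Summing over $t\in S^\kappa H_T$, each contaminated index contributes at most a geometric tail. This gives
\[
\sum_{t\in S^\kappa H_T}\sup_\varphi|r_t|\lesssim \alpha T\|\delta_T\zeta_T\|_\infty R^{\kappa}/(1-R)\to 0
\]
by Assumption~\ref{ass:contamination_decay}. The same bound holds for $\sum r_t^2$. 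In the VAR($p$) case $\pi_j=0$ for $j>p$, so with $\kappa=p$ we get $r_t\equiv0$ on $S^\kappa H_T$ exactly, and Assumption~\ref{ass:contamination_decay} is not needed.

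\emph{Step 2: criterion closeness.} By Assumption~\ref{ass:criterion_lipschitz}(a),(b),
\[
\sup_\varphi|\tilde f_T-f_T|(\varphi,S^\kappa H_T)\le \frac{C}{|S^\kappa H_T|}\sum_{t\in S^\kappa H_T}\bigl(1+|e_t(\varphi)|+|r_t|\bigr)|r_t|.
\]
We bound the cross term by Cauchy--Schwarz, combining Assumption~\ref{ass:criterion_lipschitz}(c) with $|S^\kappa H_T|^{-1}\sum r_t^2\to0$ from Step 1. The remaining terms are controlled because $|S^\kappa H_T|\ge cT$ by the feasibility condition. So the sup distance is $o_p(1)$.

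\emph{Step 3: argmin transfer.} Theorem~\ref{thm:patch_invariance} gives transferred separation with constant $\delta'$ and sharpness with $\eta'_\varepsilon$ for $f_T(\cdot,S^\kappa H_T)$. Work on the event where the sup distance is below $\min(\delta',\eta'_\varepsilon)/3$. Treat minimisation; maximisation is symmetric. Separation forces the contaminated minimiser $\tilde\varphi_T(S^\kappa H_T)$ into $\mathcal N$. Suppose it were at distance at least $\varepsilon$ from $\hat\varphi_T(S^\kappa H_T)$. Then
\[
\tilde f_T(\tilde\varphi)\ge f_T(\tilde\varphi)-\eta'/3\ge f_T(\hat\varphi)+2\eta'/3\ge\tilde f_T(\hat\varphi)+\eta'/3,
\]
which contradicts optimality. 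This yields $(K^\kappa\tilde\varphi_T)(H_T)-(K^\kappa\hat\varphi_T)(H_T)\overset{p}{\to}0$.

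The main obstacle is Step 1. It needs a uniform-in-$\varphi$ geometric bound on $\pi_j(\varphi)$, which requires restricting to a compact subset of the causal and invertible region. It also requires the count of patch tails to be bounded by $\alpha T$, and the IO combined filter to have the same geometric rate $R$, possibly after enlarging $M$.
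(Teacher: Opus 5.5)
Your three steps reproduce the paper's proof essentially line for line. Step~1 is Lemma~\ref{lem:residualconv_uniform}. Step~2 is the same three-term bound from Assumption~\ref{ass:criterion_lipschitz}(b), with Cauchy--Schwarz and Assumption~\ref{ass:criterion_lipschitz}(c). Step~3 is the same transfer of separation and sharpness from Theorem~\ref{thm:patch_invariance}. For AO, and for IO under Assumption~\ref{ass:contamination_decay}, this goes through.

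The step that fails is the IO half of the VAR($p$) clause. You correctly say that under IO the perturbation is driven by the combined filter $\pi(L)\phi_0^{-1}(L)=\phi(L)\phi_0^{-1}(L)$. You then conclude $r_t\equiv0$ on $S^pH_T$ from $\pi_j=0$ for $j>p$, but that is a property of $\phi(L)$ alone. The combined filter has finite length only at $\varphi=\varphi_0$. Already for a scalar AR(1) with $\phi(L)=1-aL$ and $\phi_0(L)=1-a_0L$,
\[
\phi(L)\phi_0^{-1}(L)=1+(a_0-a)\sum_{k\ge1}a_0^{k-1}L^k .
\]
So an IO at $\tau$ leaves $r_{\tau+k}(a)=(a_0-a)a_0^{k-1}\zeta_\tau\neq0$ for every $k\ge2$ and every $a\neq a_0$, that is, on all of $S^1H_T$ after $\tau$. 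Hence $\sup_\varphi$ of the perturbation on $S^pH_T$ is not zero, and at fixed $\kappa=p$ it receives no $R^\kappa$ damping. Step~2 then needs an Assumption~\ref{ass:contamination_decay}-type smallness condition, roughly $\alpha\|\delta_T\zeta_T\|_\infty^2\to0$. That is exactly what the clause claims to dispense with.

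The paper's Lemma~\ref{lem:residualconv_uniform} asserts the same thing (``for both AO and IO \dots\ maximal lag at most $p$''), so you have inherited this defect rather than introduced it. There are two ways to close it:
\begin{itemize}
\item Restrict the ``without Assumption~\ref{ass:contamination_decay}'' statement to AO, where the relevant filter really is $\phi(L)$ of order $p$.
\item Treat IO by a criterion-specific argument that uses $r_t(\varphi_0)=0$ on $H_T$, instead of uniform closeness of the criteria.
\end{itemize}
For least squares, for instance, write $\phi(L)=I-\sum_{j=1}^pA_jL^j$ and $\phi_0(L)=I-\sum_{j=1}^pA_{0,j}L^j$. The contaminated residual at $t\in H_T$ is then $\varepsilon_t-\sum_{j=1}^p(A_j-A_{0,j})y_{t-j}$, so the problem becomes a regression with contaminated regressors. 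Its consistency depends on how past contamination relates to current innovations, and the paper deliberately imposes no such structure.
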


An immediate consequence is that consistency is preserved under contamination.

\begin{corollary}[Consistency under contamination]\label{cor:consistency}
Under the conditions of Theorem~\ref{thm:invariance}, if
$\hat{\varphi}_T(H_T)\overset{p}{\longrightarrow}\varphi_0$,
then
\begin{equation}
(K^\kappa \tilde{\varphi}_T)(H_T)\overset{p}{\longrightarrow}\varphi_0.
\end{equation}
\end{corollary}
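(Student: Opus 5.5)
The statement to prove is Corollary~\ref{cor:consistency}. It follows from Theorem~\ref{thm:invariance} and Corollary~\ref{cor:patch_consistency} by a triangle decomposition; no new analytic argument is needed. I would write
\begin{equation*}
(K^\kappa \tilde{\varphi}_T)(H_T)-\varphi_0
=
\bigl[(K^\kappa \tilde{\varphi}_T)(H_T)-(K^\kappa \hat{\varphi}_T)(H_T)\bigr]
+
\bigl[(K^\kappa \hat{\varphi}_T)(H_T)-\hat{\varphi}_T(H_T)\bigr]
+
\bigl[\hat{\varphi}_T(H_T)-\varphi_0\bigr],
\end{equation*}
and show that each bracket is $o_p(1)$. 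The conclusion then follows because a finite sum of terms converging to zero in probability also converges to zero in probability; I would use the triangle inequality for the norm together with a union bound over the three events $\{\|\cdot\|>\epsilon/3\}$.

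The three brackets are handled as follows.
\begin{itemize}
\item The first bracket is $o_p(1)$ directly by Theorem~\ref{thm:invariance}, whose hypotheses are exactly those assumed here.
\item The third bracket is $o_p(1)$ by the hypothesis $\hat{\varphi}_T(H_T)\overset{p}{\longrightarrow}\varphi_0$.
\item The second bracket is $o_p(1)$ by the final display of Theorem~\ref{thm:patch_invariance}. Equivalently, combining it with the third bracket gives Corollary~\ref{cor:patch_consistency}, that is, $(K^\kappa\hat{\varphi}_T)(H_T)\overset{p}{\longrightarrow}\varphi_0$.
\end{itemize}

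The only point needing care, which I expect to be the main (mild) obstacle, is whether the clean-data objects $\hat{\varphi}_T(H_T)$ and $(K^\kappa\hat{\varphi}_T)(H_T)$ are well defined and satisfy the hypotheses of Theorem~\ref{thm:patch_invariance} in the contaminated setting. Theorem~\ref{thm:patch_invariance} is stated for data from the uncontaminated model. However, $\hat{\varphi}_T$ is defined through the latent residuals $e_t(\varphi)$ of $\{x_t\}$, which is generated by \eqref{eqn:arma} regardless of contamination. Assumptions~\ref{ass:local_stab} and \ref{ass:criterion_stability} are imposed on $f_T$, and $f_T$ depends only on these clean residuals. The retained set $H_T$ is treated as given, conditionally on which the analysis is carried out. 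So the argument of Theorem~\ref{thm:patch_invariance} applies verbatim to the latent process.

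In the VAR($p$) case with $\kappa=p$, the same decomposition applies, using the version of Theorem~\ref{thm:invariance} that does not require Assumption~\ref{ass:contamination_decay}.
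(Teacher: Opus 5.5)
Your proposal is correct and follows the paper's proof: the paper invokes Corollary~\ref{cor:patch_consistency}, which corresponds to your second and third brackets, and combines it with Theorem~\ref{thm:invariance}, which is your first bracket. Your remark that Theorem~\ref{thm:patch_invariance} still applies because $f_T$ depends only on the latent clean residuals is the same implicit step the paper takes, both here and inside the proof of Theorem~\ref{thm:invariance}.
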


The preceding results do not require $\kappa$ to be fixed and remain valid for sequences $\kappa=\kappa_T$, provided that the assumptions continue to hold. 
This yields a useful separation in the analysis. 
The optimisation results are independent of the contamination mechanism, while the roles of $\alpha$ and $\kappa$ arise only through model-specific conditions, such as Assumption~\ref{ass:contamination_decay}, that ensure stability.

The analysis is conducted within a weakly stationary VARMA framework, which provides a convenient operator representation for studying residual propagation. 
Under weak stationarity, Wold's decomposition theorem \citep{wold1938study} ensures that the process admits an infinite-order VMA representation. 
Thus, the propagation mechanism is not specific to finite-order VARMA models, but arises more generally in dynamic systems admitting linear or approximately linear filtering representations. 
Under standard conditions, such representations can be approximated by causal and invertible VARMA models, making the propagation structure explicit through the associated filters. 
For this reason, the VARMA framework is sufficiently general to capture a broad class of weakly stationary linear dynamic systems relevant for the present analysis.

Stationarity itself, however, is not essential. 
What matters is whether the propagation effect induced by contamination decays sufficiently fast on the retained subset. 
In the present framework, this is ensured through Assumption~\ref{ass:contamination_decay}. 
When propagation does not decay, finite patch removal is insufficient.

Similar propagation effects also arise in many weakly stationary nonlinear models, including heteroskedastic and regime-switching processes. 
In such settings, the clean-data dynamics themselves may be nonlinear, while contamination remains an external perturbation acting through the residual filter.

The framework admits the boundary case $\kappa=0$, in which
$S^\kappa H_T = H_T$,
so that patch removal reduces to the identity operator.
In this case Assumption~\ref{ass:contamination_decay} becomes
\begin{equation}\label{eq:strongcase}
\alpha T\|\delta_T\zeta_T\|_\infty \longrightarrow 0.
\end{equation}
This condition is highly restrictive and excludes settings in which residual propagation induces a non-negligible distortion of the criterion.
Thus the case $\kappa=0$ does not contradict the generic inconsistency of subsample-based estimation under dynamic contamination.
Rather, it corresponds to an exceptional boundary case in which propagation-compatible patch removal is asymptotically unnecessary.

\section{Verification of the assumptions}\label{sec:verification}

In this section, we clarify the roles of the assumptions and indicate which of them require verification in applications.
Notably, the analysis does not require structural assumptions on the contamination process, such as independence, distributional forms, or temporal dependence.
Instead, contamination enters only through its magnitude, its rate, and its propagation through the residual filter.

The propositions below focus primarily on Assumptions~\ref{ass:local_stab} and \ref{ass:criterion_stability}, which are high-level conditions imposed directly on the optimisation problem and whose generality may not be immediate from their formulation.
By contrast, Assumptions~\ref{ass:outlier_rate}--\ref{ass:contamination_decay} depend on the contamination and screening mechanism, while Assumption~\ref{ass:criterion_lipschitz} is a standard regularity condition satisfied by a broad class of residual-based criteria.
The propositions provide sufficient conditions under which Assumptions~\ref{ass:local_stab} and \ref{ass:criterion_stability} hold, thereby connecting the present framework to familiar extremum estimation arguments and existing subsample-based robust procedures.
The conditions imposed throughout are sufficient rather than necessary and are formulated at a high level to capture the structural features of the problem across a broad class of settings, rather than to characterise minimal assumptions in each component.

We begin by connecting the local optimisation stability condition to the classical extremum estimator framework.
The result is stated for minimisation, and the maximisation case is analogous.

\begin{proposition}
\label{prop:extremum_to_localstab}
Consider the uncontaminated VARMA model \eqref{eqn:arma}. Fix a subset
$H_T \in \mathcal H_T^h$, and let
\begin{equation}
\hat{\varphi}_T(H_T)
=
\arg\min_{\varphi\in\mathcal V} f_T(\varphi,H_T),
\end{equation}
where $\mathcal V \subset \mathbb R^m$ is compact and $\varphi_0 \in \mathcal V$.
Assume that

\noindent
(i) the mapping
$\varphi \mapsto f_T(\varphi,H_T)$
is continuous;

\noindent
(ii) there exists a deterministic function $f:\mathcal V\to\mathbb R$ such that
\begin{equation}
\sup_{\varphi\in\mathcal V}
\bigl|f_T(\varphi,H_T)-f(\varphi)\bigr|
\overset{p}{\longrightarrow}0;
\end{equation}

\noindent
(iii) the function $f$ is continuous and has a unique minimiser at $\varphi_0$.

Then
(a) $\hat{\varphi}_T(H_T)\overset{p}{\longrightarrow}\varphi_0$;
(b) Assumption~\ref{ass:local_stab} holds.
\end{proposition}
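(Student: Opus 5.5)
Part (a) is the standard consistency argument for extremum estimators, and part (b) follows from the same ingredients once we fix a small ball around $\varphi_0$ as the open set $\mathcal N$.

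\textbf{Part (a).} Write $\Delta_T=\sup_{\varphi\in\mathcal V}|f_T(\varphi,H_T)-f(\varphi)|$, so that $\Delta_T\overset{p}{\longrightarrow}0$ by (ii). Existence of the minimiser follows from (i) and compactness of $\mathcal V$.

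Fix $\varepsilon>0$. Because $f$ is continuous on the compact set $\{\varphi\in\mathcal V:\|\varphi-\varphi_0\|\ge\varepsilon\}$ and $\varphi_0$ is its unique minimiser, there is a gap
\begin{equation}
\gamma_\varepsilon=\inf_{\|\varphi-\varphi_0\|\ge\varepsilon}f(\varphi)-f(\varphi_0)>0 .
\end{equation}
If this set is empty, set $\gamma_\varepsilon=\infty$.

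On the event $\{\Delta_T<\gamma_\varepsilon/2\}$, any $\varphi$ with $\|\varphi-\varphi_0\|\ge\varepsilon$ satisfies
\begin{equation}
f_T(\varphi,H_T)>f(\varphi_0)+\gamma_\varepsilon/2>f_T(\varphi_0,H_T).
\end{equation}
Hence the minimiser $\hat\varphi_T(H_T)$ lies within $\varepsilon$ of $\varphi_0$. Since this event has probability tending to one, (a) follows.

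\textbf{Part (b), separation.} Take $\mathcal N=\{\varphi\in\mathcal V:\|\varphi-\varphi_0\|<r\}$ for a fixed $r>0$, open relative to $\mathcal V$. Put $\delta=\gamma_r/2$.

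On $\{\Delta_T<\gamma_r/4\}$, for every $\varphi\notin\mathcal N$,
\begin{equation}
f_T(\varphi,H_T)\ge f(\varphi_0)+\gamma_r-\gamma_r/4 .
\end{equation}
On the same event,
\begin{equation}
\inf_{\mathcal N}f_T(\varphi,H_T)\le f_T(\varphi_0,H_T)\le f(\varphi_0)+\gamma_r/4 .
\end{equation}
Subtracting gives a difference of at least $\gamma_r/2=\delta$, which is condition (i) of Assumption~\ref{ass:local_stab} with probability tending to one.

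\textbf{Part (b), sharpness.} Fix $\varepsilon>0$ and take $\eta_\varepsilon=\gamma_{\varepsilon/2}/4$. By (a), with probability tending to one we have $\|\hat\varphi_T(H_T)-\varphi_0\|<\varepsilon/2$. On that event, any $\varphi$ with $\|\varphi-\hat\varphi_T(H_T)\|\ge\varepsilon$ satisfies $\|\varphi-\varphi_0\|\ge\varepsilon/2$, and therefore $f(\varphi)\ge f(\varphi_0)+\gamma_{\varepsilon/2}$.

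Also intersect with $\{\Delta_T<\gamma_{\varepsilon/2}/4\}$. Using $f_T(\hat\varphi_T,H_T)\le f_T(\varphi_0,H_T)\le f(\varphi_0)+\Delta_T$, we get
\begin{equation}
f_T(\varphi,H_T)-f_T(\hat\varphi_T,H_T)\ge\gamma_{\varepsilon/2}-2\Delta_T>\gamma_{\varepsilon/2}/2\ge\eta_\varepsilon .
\end{equation}
This gives condition (ii) on the intersection of finitely many events, each with probability tending to one.

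\textbf{Main obstacle.} The delicate point is the quantifier order in Assumption~\ref{ass:local_stab}(ii): $\eta_\varepsilon$ must be deterministic, while the centre $\hat\varphi_T(H_T)$ is random. The plan handles this by first localising the centre via (a) and then using the deterministic gap $\gamma_{\varepsilon/2}$ of the limit function. If $\mathcal N$ must be open in $\mathbb R^m$ rather than in $\mathcal V$, one takes the open ball and intersects it with $\mathcal V$; none of the bounds change.
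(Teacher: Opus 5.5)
Your proposal is correct and follows essentially the same route as the paper. In both, a positive gap of the limit criterion $f$ on the compact complement of a neighbourhood of $\varphi_0$ yields separation, and localising $\hat{\varphi}_T(H_T)$ within $\varepsilon/2$ of $\varphi_0$ via (a), combined with the deterministic gap on $\{\|\varphi-\varphi_0\|\ge\varepsilon/2\}$, yields sharpness. The differences are cosmetic: you prove (a) directly instead of citing the standard extremum-estimator theorem, you fix $\mathcal N$ as a ball where the paper takes an arbitrary open neighbourhood of $\varphi_0$, and your constants are $\gamma/2$, $\gamma/4$ where the paper uses $\delta_0/3$, $\eta_0/3$.
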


Proposition~\ref{prop:extremum_to_localstab} shows that Assumption~\ref{ass:local_stab} is compatible with the standard extremum-estimator framework. 
In particular, whenever the sample criterion converges uniformly to a deterministic limit with a uniquely identified minimiser, local separation and local sharpness follow asymptotically. 
Thus, the present framework includes the classical consistency setting as a special case, even though the main results of this paper are formulated directly at the level of sample criteria and do not rely on the existence of a limiting objective function.

\begin{proposition}
\label{prop:patch_stability}
Suppose that Assumption~\ref{ass:criterion_lipschitz}(a) holds for $H_T$ and the corresponding transformed subsets $S^\kappa H_T$ for all sufficiently large $T$.
Assume that

\noindent
(i) the number of removed indices is asymptotically negligible, that is,
\begin{equation}
\frac{|H_T\setminus S^\kappa H_T|}{|H_T|}
\overset{p}{\longrightarrow}0;
\end{equation}

\noindent
(ii) the criterion contributions are uniformly bounded in probability,
\begin{equation}
\sup_{\varphi\in\mathcal V}\sup_{t\in H_T}
\bigl|m_t(e_t(\varphi))\bigr|
=
O_p(1).
\end{equation}

Then Assumption~\ref{ass:criterion_stability} holds.
\end{proposition}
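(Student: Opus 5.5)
Write $S=S^\kappa H_T$ and $D=H_T\setminus S$. Since $S\subseteq H_T$, the representation in Assumption~\ref{ass:criterion_lipschitz}(a) lets us compare the two averages directly. Writing $a_t(\varphi)=m_t(e_t(\varphi))$, the plan is to expand the difference of averages exactly and then bound each piece uniformly in $\varphi$ using only (i) and (ii).

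\textbf{Decomposition.} Because $|H_T|=|S|+|D|$, I will use
\begin{equation*}
f_T(\varphi,H_T)-f_T(\varphi,S)
=\frac{1}{|H_T|}\sum_{t\in D}a_t(\varphi)
+\Bigl(\frac{1}{|H_T|}-\frac{1}{|S|}\Bigr)\sum_{t\in S}a_t(\varphi),
\end{equation*}
and simplify the coefficient of the second term to $-|D|/(|H_T|\,|S|)$.

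\textbf{Bounding the two terms.} Let $B_T=\sup_{\varphi\in\mathcal V}\sup_{t\in H_T}|a_t(\varphi)|$, which is $O_p(1)$ by (ii). Since $S\subseteq H_T$, the first term is at most $B_T|D|/|H_T|$. The second term is at most $|D|/(|H_T||S|)\cdot|S|B_T=B_T|D|/|H_T|$. Hence
\begin{equation*}
\sup_{\varphi\in\mathcal V}\bigl|f_T(\varphi,H_T)-f_T(\varphi,S)\bigr|\le 2B_T\,\frac{|D|}{|H_T|}.
\end{equation*}
This bound is uniform in $\varphi$ because $B_T$ already takes the supremum over $\varphi$.

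\textbf{Conclusion.} By (i), $|D|/|H_T|\overset{p}{\to}0$. A product of an $O_p(1)$ term and an $o_p(1)$ term is $o_p(1)$, which gives Assumption~\ref{ass:criterion_stability}.

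\textbf{Main obstacle.} The only delicate point is that $S$ must be nonempty so that $f_T(\varphi,S)$ is defined. By (i), $|S|/|H_T|\to1$ in probability, so $S\neq\emptyset$ with probability tending to one. On the complementary event, whose probability vanishes, the difference can be set arbitrarily without affecting convergence in probability. The argument uses no measurability beyond what is implicit in treating the suprema as random variables. If that is an issue, I would work with outer probability, which the $O_p$ statement in (ii) already presumes.
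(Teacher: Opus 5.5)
Your proposal is correct and follows the paper's proof essentially line by line. It uses the same split into the removed-index sum and the renormalisation term, the same bound $2\,(|H_T\setminus S^\kappa H_T|/|H_T|)\sup_{\varphi,t}|m_t(e_t(\varphi))|$, and the same $O_p(1)\cdot o_p(1)$ conclusion; your extra remark that $S^\kappa H_T$ is nonempty with probability tending to one is a small point the paper leaves implicit.
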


Proposition~\ref{prop:patch_stability} provides a transparent route to verify Assumption~\ref{ass:criterion_stability} for criteria constructed as averages of residual contributions. 
The key requirement is that patch removal affects only an asymptotically negligible fraction of the retained indices, so that the induced perturbation of the empirical criterion vanishes uniformly in the parameter. 
Formally, this corresponds to condition (i).

To interpret this condition, note that $S^\kappa$ removes, for each initially excluded observation, a patch of length at most $\kappa$. 
Hence, the total number of additionally removed observations is at most of order $\alpha \kappa T$, while $|H_T| \approx (1-\alpha)T$, yielding the bound
\begin{equation}
\frac{|H_T\setminus S^\kappa H_T|}{|H_T|}
\lesssim
\frac{\alpha \kappa}{1-\alpha}.
\end{equation}
A sufficient condition is therefore that $\alpha \kappa \to 0$. 
Combined with the contamination decay requirement in Assumption~\ref{ass:contamination_decay}, which typically requires $\kappa$ to grow at a logarithmic rate, this leads to conditions of the form
\begin{equation}
\alpha \log(\alpha T \|\delta_T \zeta_T\|_\infty) \to 0,
\end{equation}
and, in typical settings, to $\alpha \log T \to 0$. 
This corresponds to a setting in which contamination is asymptotically sparse, even though the total number of contaminated observations may diverge.

Condition (i) is not the only possible route to stability. 
An alternative is to impose conditions under which the criteria based on $H_T$ and $S^\kappa H_T$ have asymptotically equivalent behaviour under clean data. 
However, such an approach typically requires stronger assumptions on the data-generating process and dependence structure, since the thinning induced by $S^\kappa$ must preserve the limiting behaviour of the criterion. 
The advantage of condition (i) is that it achieves stability by directly controlling the fraction of affected observations, making it particularly suitable in settings with general dependence structures.

For pure VAR$(p)$ models, the effective patch length is finite, so that $\kappa$ can be taken as a fixed constant (e.g.\ $\kappa=p$), rather than growing with $T$. 
In this case, the condition $\alpha \kappa \to 0$ reduces to $\alpha \to 0$, and the compatibility with Assumption~\ref{ass:contamination_decay} becomes immediate.

Condition (ii) is stated as a convenient sufficient condition. 
For unbounded criteria, such as quadratic or Gaussian likelihood contributions, the same conclusion can be obtained under suitable moment and maximal control conditions on the residual sequence.

The preceding propositions show that Assumptions~\ref{ass:local_stab} and \ref{ass:criterion_stability} can be verified under familiar conditions. 
This clarifies the scope of the present framework and its connection to existing subsample-based robust estimators. 
In particular, the framework is compatible with the robust procedures studied by \citet{JohansenNielsen2009,JohansenNielsen2013,JohansenNielsen2016}, insofar as these procedures produce a retained subset $H_T$ on which subsequent estimation is based. 
Procedures such as Huber-skip and LTS differ in how the subset $H_T$ is constructed, but this distinction is immaterial here, as the framework is formulated conditional on $H_T$ and does not depend on how it is obtained. 
Different procedures may employ one-step rules, such as Huber-skip and LTS, or iterative and sequential methods such as the forward search \citep{HadiSimonoff1993, AtkinsonRiani2000}. 
Once a subset $H_T$ is obtained, the patch removal operator can be applied to construct $S^\kappa H_T$, after which estimation proceeds on the transformed subset.

Under the regularity conditions imposed in that literature, including tightness of initial estimators and conditions on the innovation density and regressors, the sample criteria underlying these estimators satisfy the sufficient conditions in Propositions~\ref{prop:extremum_to_localstab} and \ref{prop:patch_stability}. 
These results therefore provide concrete examples under which Assumptions~\ref{ass:local_stab} and \ref{ass:criterion_stability} hold jointly.

The high-level assumptions are thus weaker than the sufficient conditions in Propositions~\ref{prop:extremum_to_localstab} and \ref{prop:patch_stability}, which are themselves mild and hold under familiar regularity conditions. 
Together with Theorem~\ref{thm:invariance}, this implies that, once residual propagation is properly controlled, the proposed transformation ensures consistent estimation for a broad class of procedures based on subset selection in dynamic models under sufficiently sparse contamination.

\section{Simulation}\label{sec:simulation}

This section studies the finite-sample implications of the propagation-compatible correction under controlled contamination. 
The design isolates the estimation problem from the subset selection problem. 
In standard robust procedures, methods such as Huber-type estimators, LTS, or related approaches produce a retained subset of observations to be used for subsequent estimation. 
While these procedures differ in how the subset $H_T$ is constructed, the present analysis treats that subset as given and studies the effect of residual propagation and patch removal conditional on it.

In the present design, we therefore work directly with a retained subset $H_T$ containing only uncontaminated observations. 
This isolates residual propagation and patch removal from the separate problem of outlier detection and represents a best-case benchmark for subsample-based procedures.

In each Monte Carlo replication, the model parameters are re-estimated from the simulated sample using a residual-based subsample estimator defined on the retained set. 
Since all such procedures share the same second-stage estimation on $H_T$, the specific choice of estimator is not central to the purpose of the experiment.
Estimation is carried out by numerically minimising the determinant of the residual covariance matrix over the selected observations after patch removal. 
The optimisation is initialised at the true parameter value to isolate the effect of contamination and propagation correction from numerical issues. 
The model order is assumed to be known and correctly specified.

We consider three standard models VAR(1), VMA(1), and VARMA(1,1). 
Let $\varepsilon_t \sim N_2(0, \Sigma)$ and
\begin{equation}
\Sigma =
\begin{pmatrix}
1 & 0.2 \\
0.2 & 1
\end{pmatrix},
\quad
A =
\begin{pmatrix}
0.7 & 0 \\
0.3 & 0.7
\end{pmatrix}.
\end{equation}
The data-generating processes are
\begin{equation}
\begin{aligned}
\text{VAR(1):} \quad & x_t = A x_{t-1} + \varepsilon_t, \\
\text{VMA(1):} \quad & x_t = \varepsilon_t + A \varepsilon_{t-1}, \\
\text{VARMA(1,1):} \quad & x_t = A x_{t-1} + \varepsilon_t + A \varepsilon_{t-1}.
\end{aligned}
\end{equation}
Contamination is introduced as AO or IO. 
A proportion $\alpha \in \{1\%, 5\%, 10\%\}$ of indices is contaminated with magnitude $\zeta \in \{5, 10, 50, 100\}$. 
Sample sizes are $T \in \{500, 1000\}$.

The trimming parameter $\kappa$ controls the length of residual patches removed after each contaminated observation. 
The case $\kappa=0$ corresponds to oracle subsampling without accounting for residual propagation, and serves as a benchmark for the failure of standard subsample-based procedures. 

Estimator performance is evaluated using total bias and root mean squared error (RMSE)
\begin{equation}
\text{Bias} =
\sqrt{
\sum_{j=1}^n
\left(
\frac{1}{m} \sum_{i=1}^m \tilde{\varphi}_j^{(i)} - \varphi_{0,j}
\right)^2
}, \quad
\text{RMSE} =
\sqrt{
\sum_{j=1}^n
\left(
\frac{1}{m} \sum_{i=1}^m
\left(
\tilde{\varphi}_j^{(i)} - \varphi_{0,j}
\right)^2
\right)
}.
\end{equation}
The simulation results are reported in Tables~\ref{tab:VAR_bias_RMSE}--\ref{tab:VARMA_bias_RMSE_1000}. 
Across all experiments, a clear failure-correction contrast emerges. 
When $\kappa=0$, substantial bias arises under contamination despite oracle subsampling, confirming that subsample-based estimation fails even when the retained subset contains only uncontaminated observations. 
This shows that the failure is not due to imperfect identification of contaminated observations, but arises from residual propagation in the estimation stage itself. 
The distortion increases sharply with the outlier magnitude $\zeta$, reflecting amplification through the residual-based criterion. 
In contrast, increasing $\kappa$ restores estimation accuracy, while inducing only a mild efficiency loss under clean data.

Table~\ref{tab:VAR_bias_RMSE} reports the VAR(1) results. 
Under AO contamination, estimation deteriorates sharply at $\kappa=0$, reflecting finite residual propagation that is not removed by standard trimming. 
Setting $\kappa=1$ restores performance to the clean-data benchmark, consistent with the finite propagation length. 
IO contamination has no effect in this setting and is omitted, in line with Theorem~2.

Table~\ref{tab:VMA_bias_RMSE} reports the VMA(1) results. 
Under contamination, severe distortions arise at $\kappa=0$, reflecting the accumulation of propagated residual effects. 
As $\kappa$ increases, performance improves markedly, with particularly strong gains for large $\zeta$. 
AO and IO yield identical results, as both generate non-local residual effects.

Tables~\ref{tab:VARMA_bias_RMSE_500} and \ref{tab:VARMA_bias_RMSE_1000} report the VARMA(1,1) results. 
Under contamination, both AO and IO induce substantial distortions at $\kappa=0$, with AO having a larger initial effect. 
Increasing $\kappa$ reduces the distortions markedly and brings both cases close to the clean benchmark. 
The difference between AO and IO diminishes for sufficiently large $\kappa$, reflecting the joint propagation structure.

Taken together, the simulation results confirm the theoretical findings and clarify their implications. 
First, even under oracle subsampling, substantial bias arises when residual propagation is not controlled, demonstrating that subsample-based estimation fails in dynamic models. 
Second, the distortion increases with the magnitude of contamination, reflecting its amplification through the residual-based criterion. 
Third, patch removal restores estimation accuracy across all settings, while incurring only a modest efficiency loss under clean data. 
These findings highlight that reliable estimation requires controlling residual propagation, rather than relying solely on pointwise trimming.

\section{A Feasible Implementation of the Patch Removal Operator} \label{sec:algorithm}

This section describes a feasible implementation of the patch removal operator for empirical use and illustrates how it can be incorporated into existing robust procedures.
Subsample-based robust procedures for contaminated time series produce a retained subset for estimation; see, for example, \citet{JohansenNielsen2016}.
Suppose that such a procedure produces an estimate and a retained subset $H_T$. 
Patch removal then refines $H_T$ to account for residual propagation.

Given an initial subset $H_T$, define the patch-removed subset $S^\kappa H_T$ according to \eqref{eqn:patch-removal}, where $\kappa \ge 0$ is a user-specified patch length. 
The estimator is then re-applied to the reduced subset, yielding
\begin{equation}
\tilde{\varphi}_T^{\kappa}
=
(K^\kappa \tilde{\varphi}_T)(H_T)
=
\tilde{\varphi}_T(S^\kappa H_T).
\end{equation}
This step implements the operator $K^\kappa$ and removes not only directly contaminated observations but also the subsequent residual patches induced by dynamic propagation. 
The screening step serves only to provide an initial conservative subset. 

In finite samples, the initial screening step may misclassify observations, particularly when it relies on level-based diagnostics. 
For this reason, the procedure may be iterated. 
Let $\tilde{\varphi}_T^{(\kappa,0)}=\tilde{\varphi}_T(H_T)$ and $H_T^{(0)}=H_T$. 
The recursive scheme is given in Algorithm~\ref{alg:iteration}. 
Under the conditions of Theorem~\ref{thm:invariance}, a single application of $K^\kappa$ is sufficient for asymptotic validity, so iteration serves only as a finite-sample refinement.
When the initial screening step is accurate, the algorithm typically stabilises quickly. 
When classification is imperfect, the refinement step may improve the approximation to the uncontaminated subset.

\begin{algorithm}
\caption{Iterative feasible patch removal procedure}
\label{alg:iteration}
\begin{algorithmic}[1]
\Require Initial retained subset $H_T$, robust estimator $\tilde{\varphi}_T(\cdot)$, patch length $\kappa \ge 0$
\Ensure Final estimate $\tilde{\varphi}_T^{(\kappa,m)}$ and retained subset $H_T^{(m)}$

\State Initialise $\tilde{\varphi}_T^{(\kappa,0)}=\tilde{\varphi}_T(H_T)$ and $H_T^{(0)}=H_T$.

\For{$m=0,1,2,\dots$}
    \State Compute residuals $\tilde e_t^{(m)}$ based on $\tilde{\varphi}_T^{(\kappa,m)}$, for $t=1,\dots,T$.
    \State Apply the estimator-specific screening rule based on $\{\tilde e_t^{(m)}\}_{t=1}^T$ to obtain $H_T^{(m+1)}$.
    \State Compute $\tilde{\varphi}_T^{(\kappa,m+1)} = (K^\kappa \tilde{\varphi}_T)(H_T^{(m+1)}) = \tilde{\varphi}_T(S^\kappa H_T^{(m+1)})$.
    \If{$H_T^{(m+1)} = H_T^{(m)}$}
        \State \textbf{return} $\bigl(\tilde{\varphi}_T^{(\kappa,m+1)}, H_T^{(m+1)}\bigr)$
    \EndIf
\EndFor
\end{algorithmic}
\end{algorithm}

The choice of $\kappa$ is governed by the propagation structure of the underlying dynamic model. 
In particular, the effective propagation length depends on the AR and moving-average orders $(p,q)$ and on the contamination mechanism. 
This connection is formalised in Assumption~\ref{ass:contamination_decay}. 
For pure VAR($p$) models, residual patches have finite length, so that $\kappa$ can be taken equal to $p$. 
For models with a moving-average component, residual patches are of infinite length but decay geometrically, so that $\kappa$ must typically increase with the sample size at a logarithmic rate. 
Thus, the choice of $\kappa$ is linked to the underlying VARMA specification.

In practice, however, neither the model orders $(p,q)$ nor the contamination mechanism are known in advance, and both may be difficult to infer reliably in contaminated data. 
This creates a circular problem in that the choice of $\kappa$ depends on the model specification, while the model specification itself is obscured by contamination.
We therefore adopt a pragmatic joint-selection approach. 
A collection of candidate specifications indexed by $(\kappa,p,q)$ is considered. 
For each candidate, $\kappa$ is chosen large enough to remove the dominant part of the residual patch, but not so large that the retained subset becomes too small. 
The estimation procedure in Algorithm~\ref{alg:iteration} is then applied to each candidate.

Candidate specifications may be screened using the observed propagation patterns in the contaminated series. 
Although uncontaminated residuals are not observable, the behaviour of observed aberrant level and residual patches still provides useful diagnostic information. 
In particular, Table~\ref{tab:propagation_patterns} shows that AO contamination does not generate persistent level patches, but may induce persistent residual patches, whereas IO contamination generates persistent level patches together with model-dependent residual patch behaviour. 
These qualitative features can be used to rule out implausible combinations of contamination type, patch length, and dynamic specification.

For estimation itself, explicit classification of contamination as AO or IO is not essential. 
The role of $\kappa$ is to remove the dominant part of residual propagation, and a sufficiently conservative choice typically yields robustness regardless of the underlying mechanism. 
The AO--IO distinction is more relevant for interpretation and prediction, since different contamination mechanisms imply different propagation patterns and hence potentially different out-of-sample implications. 
For this reason, the distinction is used here mainly as a diagnostic device and is revisited in the empirical illustration.

After this preliminary reduction of the candidate set, we require a simple rule to compare the remaining models. 
The robust model selection literature contains substantially more elaborate proposals. 
For example, \citet{Müller01122005} combine a robust penalised in-sample criterion with a robust estimate of conditional prediction loss obtained via a stratified bootstrap, and note that the associated computational burden can be substantial. 
Model selection, however, is not the primary objective of the present paper. 
Our purpose is only to adopt a simple and operational comparison rule for illustration.

We therefore employ a normalised information criterion based on the average log-likelihood, that is, AIC on a per-observation scale:
\begin{equation}\label{eqn:aic_avg}
\mathrm{AIC}_{\mathrm{avg}}
=
-\frac{2}{|S^\kappa H_T|}\log \hat L
+
\frac{2k}{|S^\kappa H_T|},
\end{equation}
where $|S^\kappa H_T|$ is the effective sample size after subsampling and patch removal, and $k$ is the number of estimated parameters. 
This normalisation reduces the scale effect induced by differing effective sample sizes and facilitates heuristic comparison across candidates. 
The criterion is used here solely as a practical device rather than as a fully developed contribution to robust model selection.

\section{Empirical Illustration}\label{sec:illustration}

We consider two daily Icelandic river flow series covering the period from 1 January 1972 to 31 December 1974, yielding a total of $T=1096$ observations, in which the flow is measured in cubic metres per second. The data originate from the Hydrological Survey of the National Energy Authority of Iceland and were first analysed by \citet{ToThGu85}. The dataset has subsequently been used in the time series literature, including \citet{Tsa98} and \citet{TeYa14}, where bivariate river flow systems are analysed using threshold and smooth transition models, respectively.
The two series correspond to the rivers Jökulsá eystra and Vatndalsá. The former is the larger river, with a drainage basin that includes a glacier, whereas the latter has a smaller basin with a substantial contribution from groundwater. This leads to smoother dynamics in the former and more pronounced variation in the latter.

In the present analysis, we focus on the observed bivariate river flow series and denote them by $y_{1t}$ and $y_{2t}$. The series exhibit strong serial correlation and a nontrivial temporal structure, making them suitable for illustrating the framework developed in this paper. Unlike earlier studies, we do not incorporate exogenous variables such as precipitation or temperature, as our aim is not to construct a structural model of river flow dynamics.

We do not impose any assumption on the presence of outliers in the data. The empirical illustration is designed to examine the behaviour of the proposed procedure under potential contamination, irrespective of whether such contamination is present in the observed series. The analysis is therefore conducted in a robustness framework rather than as a diagnostic exercise.
While the dataset has often been used to study nonlinear and regime-dependent dynamics, our objective here is different in that it is used to provide a realistic temporal structure within which the impact of contamination on residual-based estimation can be examined.

Accordingly, the dataset serves as a representative example of a multivariate time series with nontrivial autocorrelation. This setting allows us to study how contamination, if present, propagates through temporal dependence and affects residual-based estimation, and to assess how the proposed procedure mitigates such effects in practice. For reference, Figure~\ref{fig:river_flags} displays the series together with observations identified by the iterative procedure in Algorithm~\ref{alg:iteration}.

Model comparison is carried out over a range of VARMA$(p,q)$ specifications. For each candidate model, estimation is conducted using the Huber-skip procedure with trimming proportion $\alpha=0.1$, followed by patch removal with $\kappa=5$. Model comparison is based on a Gaussian log-likelihood computed from the residuals, and we report the corresponding $\mathrm{AIC}_{\mathrm{avg}}$ \eqref{eqn:aic_avg} in Table~\ref{tab:aic_avg}.

\begin{table}[ht]
\centering
\begin{tabular}{ccccc}
\toprule
$p$ & $q$ & $|S^\kappa H_T|$ & $\mathrm{AIC}_{\mathrm{avg}}^{\text{patch}}$ & $\mathrm{AIC}_{\mathrm{avg}}^{\text{full}}$ \\
\midrule
1 & 0 & 730 & -5.28034 & -2.99327 \\
2 & 0 & 715 & \textbf{-5.38266} & -3.08275 \\
3 & 0 & 714 & -5.16179 & -3.00439 \\
1 & 1 & 725 & -5.34597 & -3.08562 \\
2 & 1 & 716 & -5.34479 & \textbf{-3.15690} \\
3 & 1 & 709 & -5.11340 & -2.87621 \\
1 & 2 & 716 & -5.37751 & 269.99579 \\
2 & 2 & 723 & -5.12414 & -3.05836 \\
3 & 2 & 724 & -4.68647 & -2.87460 \\
1 & 3 & 719 & 174.12730 & -2.86583 \\
2 & 3 & 987 & -2.86049 & 94.37901 \\
3 & 3 & 709 & -4.82302 & -2.92548 \\
\bottomrule
\end{tabular}
\caption{Model comparison based on $\mathrm{AIC}_{\mathrm{avg}}$ for different $(p,q)$ with $\kappa=5$.}
\label{tab:aic_avg}
\end{table}

We restrict attention to a single subsample-based procedure for illustration. While alternative approaches, such as LTS, could also be employed, the focus is not on comparing robust estimation algorithms, but on isolating the effect of residual propagation and patch removal on the estimation criterion. The Huber-skip procedure is therefore used as a representative method that produces a data-dependent subset and permits repeated estimation across candidate models.
For each specification, we report the full sample estimator, the Huber-skip estimator without patch removal, and the corresponding estimator with patch removal.
These estimators differ in how contamination is handled.
The full sample estimator uses the observed series directly, Huber-skip applies pointwise trimming, and patch removal additionally accounts for residual propagation.
The comparison between Huber-skip and patch removal therefore isolates the effect of residual propagation, as any remaining discrepancy after trimming reflects contamination effects that are not removed by standard robust procedures.

The trimming proportion $\alpha=0.1$ is chosen to allow for a moderate level of contamination, so that the retained subset is expected to be dominated by observations consistent with the underlying data-generating mechanism. For each $(p,q)$, we report two versions of the information criterion, one based on the retained subset after patch removal and one based on the full sample. The use of $\mathrm{AIC}_{\mathrm{avg}}$ is heuristic, serving as a per-observation measure for comparing candidate models rather than as a formal robust model selection procedure.

Several features emerge from the results. First, the patch removal procedure generally yields lower values of $\mathrm{AIC}_{\mathrm{avg}}$, reflecting an improved Gaussian fit once extreme observations and their propagation effects are removed. Second, for certain specifications, such as $(p,q)=(1,2)$ and $(2,3)$, the information criterion becomes extremely large under both approaches, indicating numerical instability in the estimated residual covariance matrix. This behaviour is not driven by subsampling or patch removal, but reflects the well-known sensitivity of VARMA models with MA components, which require nonlinear optimisation and may yield ill-conditioned estimates in finite samples. 

A related feature arises for $(p,q)=(2,3)$, where $|S^\kappa H_T| = |H_T|$, indicating that identified extreme observations are concentrated at the end of the sample, so that patch removal becomes inactive. Taken together, these patterns suggest that irregular values for higher-order specifications are driven primarily by numerical difficulties associated with the MA component, rather than by the subsampling or patch removal procedure. This distinction highlights that the proposed method primarily affects the statistical properties of the estimation criterion, whereas numerical stability is governed by the underlying VARMA structure.

Finally, the model minimising $\mathrm{AIC}_{\mathrm{avg}}$ differs across the two approaches. The VARMA$(2,0)$ specification minimises $\mathrm{AIC}_{\mathrm{avg}}$ under patch removal, whereas the VARMA$(2,1)$ model achieves the lowest value under the full sample criterion. This discrepancy illustrates that departures from Gaussianity, including the presence of extreme observations, can materially affect likelihood-based model comparison.

We report all model specifications without selection or filtering to provide a transparent assessment across configurations, including cases exhibiting numerical instability. The grid is restricted to $p \geq 1$, and no improvement is observed from higher-order specifications. Overall, the results should be interpreted as illustrating how the proposed procedure modifies likelihood-based model comparison, rather than as definitive evidence for a particular model choice.

\begin{figure}
  \centering
  \caption{Daily river flow series for Jökulsá (top) and Vatndalsá (bottom).
  Vertical lines indicate observations identified as outliers by Algorithm~\ref{alg:iteration} with $(\kappa = 2)$.}
  \includegraphics[width=0.9\textwidth]{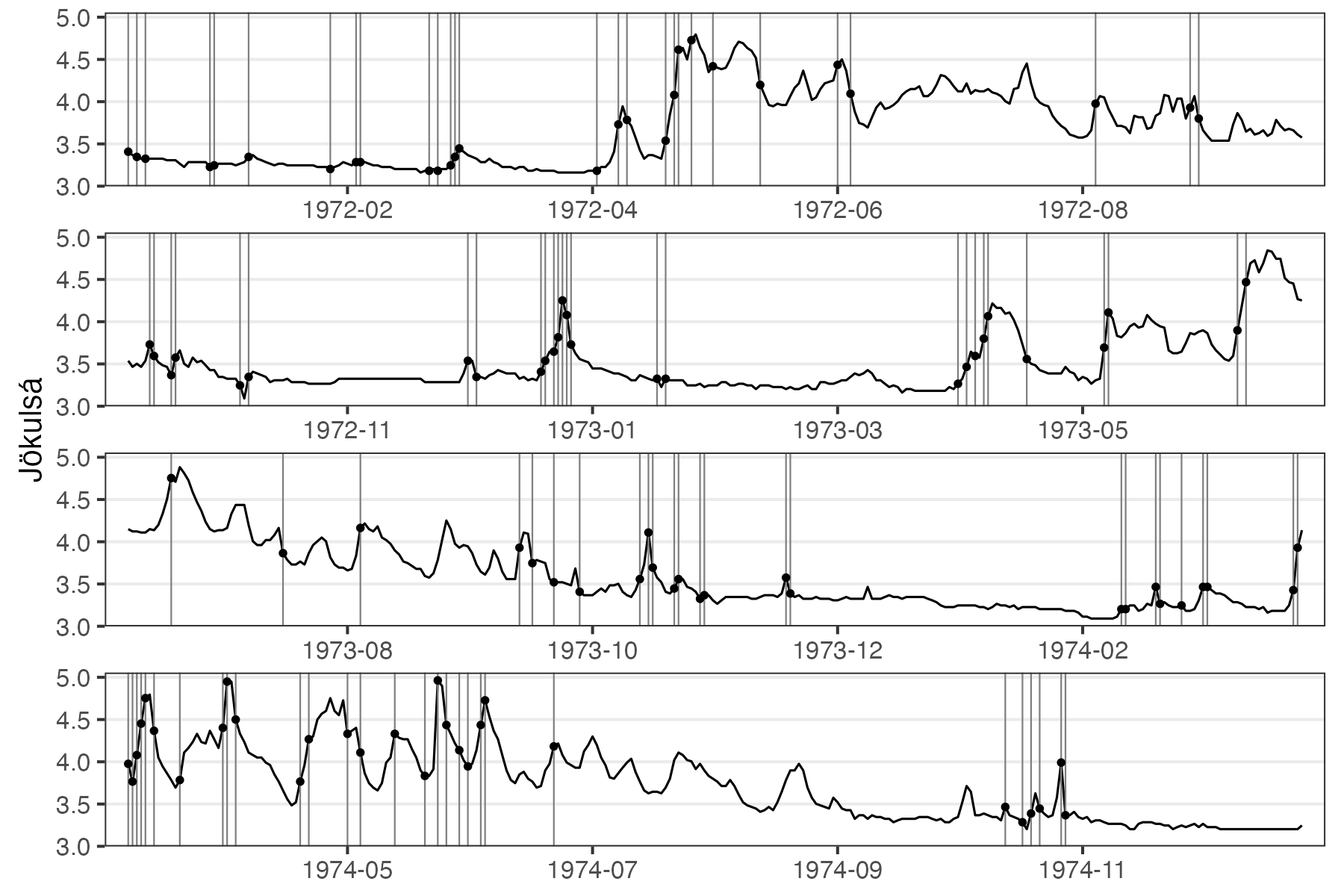}

  \vspace{0.5cm}

  \includegraphics[width=0.9\textwidth]{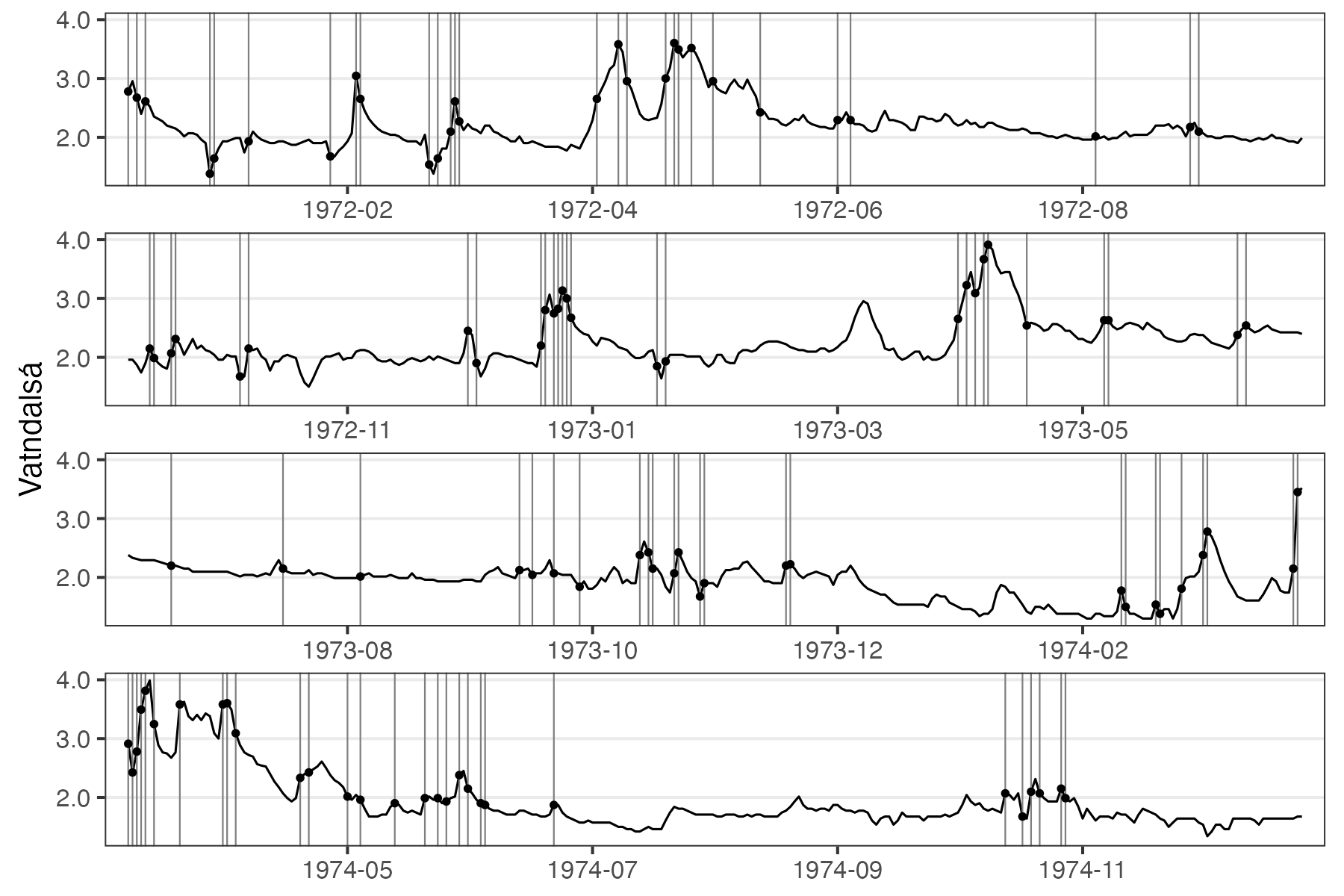}
  \label{fig:river_flags}
\end{figure}

For clarity, the full sample estimates are expressed in terms of the observed process $y_t$, whereas the robust estimates approximate the latent uncontaminated process $x_t$.

Based on Table~\ref{tab:aic_avg}, we adopt the VARMA$(2,0)$ specification. For this model, we set $\kappa=2$ in the iterative patch removal procedure and report estimates from the full sample, Huber-skip without patch removal, and patch removal.
The three estimators differ in whether they are based directly on the observed series $y_t$ or aim to recover the latent process $x_t$.
\begin{description}

\item[Full-sample estimates]
\begin{equation}
\begin{aligned}
z_{1, t} &= 1.233 z_{1, t-1} + 0.064 z_{2, t-1} - 0.258 z_{1, t-2} - 0.072 z_{2, t-2} + \varepsilon_{1, t}, \\
z_{2, t} &= 0.164 z_{1, t-1} + 0.993 z_{2, t-1} - 0.173 z_{1, t-2} - 0.064 z_{2, t-2} + \varepsilon_{2, t},
\end{aligned}
\end{equation}
where $z_{1, t} = y_{1, t} - 3.239$ and $z_{2, t} = y_{2, t} - 2.094$.

\item[Huber-skip without patch removal]
\begin{equation}
\begin{aligned}
z_{1, t} &= 1.206 z_{1, t-1} + 0.103 z_{2, t-1} - 0.222 z_{1, t-2} - 0.107 z_{2, t-2} + \varepsilon_{1, t}, \\
z_{2, t} &= -0.017 z_{1, t-1} + 1.210 z_{2, t-1} + 0.015 z_{1, t-2} - 0.258 z_{2, t-2} + \varepsilon_{2, t},
\end{aligned}
\end{equation}
where $z_{1, t} = x_{1, t} - 3.077$ and $z_{2, t} = x_{2, t} - 1.929$.

\item[Huber-skip with patch removal ($\kappa = 2$)]
\begin{equation}
\begin{aligned}
z_{1, t} &= 1.291 z_{1, t-1} + 0.072 z_{2, t-1} - 0.310 z_{1, t-2} - 0.070 z_{2, t-2} + \varepsilon_{1, t}, \\
z_{2, t} &= -0.008 z_{1, t-1} + 1.270 z_{2, t-1} + 0.010 z_{1, t-2} - 0.312 z_{2, t-2} + \varepsilon_{2, t},
\end{aligned}
\end{equation}
where $z_{1, t} = x_{1, t} - 3.544$ and $z_{2, t} = x_{2, t} - 1.933$.

\end{description}

The estimated coefficients differ non-negligibly across the three approaches. 
The Huber-skip estimator removes extreme observations through pointwise trimming, but residual propagation still distorts the estimated dependence structure. 
By contrast, the patch removal estimator accounts for this propagation mechanism and yields estimates more consistent with the latent uncontaminated process. 
The resulting differences are systematic rather than incidental and are consistent with the theoretical results on criterion distortion under dynamic contamination.

The observations identified by the iterative procedure are illustrated in Figure~\ref{fig:river_flags}. 
A comparison with Figure~\ref{fig:AO_IO_combined} suggests that the flagged observations are more consistent with IO-type contamination, although no formal classification is attempted. 
In particular, extreme observations are often followed by gradual adjustments or short-lived reversals, reflecting propagation through the AR structure.

\section{Conclusion}\label{sec:conclusion}

This paper establishes that subsample-based estimation, as commonly used in robust time series analysis, is generically inconsistent in dynamic settings when contamination propagates through the residual filter. 
The failure is structural and arises because residual propagation distorts the criterion defining the estimator itself, rather than merely contaminating a small number of observations. 
Consequently, removing contaminated observations does not in general recover the clean-data objective.

We introduce a propagation-compatible transformation of retained index sets, formalised through the patch removal operator. 
The operator provides a general correction for a broad class of residual-based subsample estimators without modifying their internal structure. 
Under suitable conditions, it leaves the uncontaminated estimator asymptotically unchanged while restoring consistency under contamination. 
The analysis also clarifies an important distinction across model classes.
For pure VAR models the required correction is finite, whereas moving-average components may generate propagation effects with infinite memory.

These results have direct implications for empirical work. 
When outliers are suspected, inference should not be based on standard estimators applied to the contaminated sample, or on subsample procedures that remove only the visibly aberrant observations. 
Instead, inference should be based on an estimator that remains consistent for the clean-data mechanism under contamination.
This is particularly important when the aim is to assess structural features of the clean data generating mechanism or model adequacy more generally.

The paper also points to several natural extensions, which fall into four directions.
First, at the level of estimation, an important extension is the estimation of the contamination magnitude $\zeta_t$. 
This problem is inherently conditional on the one studied here, as meaningful estimation of contamination effects requires consistent estimation of the clean-model parameters. 
Further extensions concern the patch removal mechanism, including the use of heterogeneous trimming lengths across contamination events or across observations. 
While the present analysis shows that a uniform bound on $\kappa$ suffices for consistency, relaxing this restriction may improve flexibility in practice.

Second, forecasting is a natural extension. 
In many applications the primary objective is to predict the clean process $x_t$.
Since contamination is typically sparse and does not reflect the underlying dynamics, forecasts should be based on a consistent estimate of the clean model, so as to capture the behaviour of the system in the absence of contamination.

Third, an important direction concerns inference. 
Outliers and residual propagation induce a structural distortion in estimation, under which standard hypothesis testing is conducted on estimators that are not consistent under contamination. 
Consequently, hypothesis tests may produce spurious statistical evidence,
including misleading conclusions on model parameters and on structural features of the underlying data-generating mechanism, thereby distorting the interpretation of fitted models.
By restoring consistency at the estimation stage, the present framework provides a basis for developing inference procedures that remain valid under contamination, including tests that account for residual propagation.

Finally, there are extensions related to model evaluation and selection. 
In particular, the development of subsample-based information criteria tailored to this framework remains an open problem. 
The average AIC used in the present paper serves only as a practical device, and a principled criterion for contaminated dynamic models with patch removal requires separate analysis.

These directions are important but raise distinct issues, and are therefore best studied separately. 
The present paper provides a foundation for such developments.

There are also clear limitations. 
First, the analysis is developed under settings in which contamination is sufficiently sparse. 
This is not merely a technical restriction. 
When contamination is frequent or persistent, it becomes difficult to distinguish external outliers from features that belong to the underlying structural mechanism. 
Second, patch removal necessarily reduces the effective sample size. 
Consequently, finite-sample convergence cannot be faster than in uncontaminated full sample estimation, and is slower when contamination is substantial or when the required patch length is large. 
This cost is unavoidable if consistency for the clean parameter is to be preserved.

For these reasons, the main practical recommendation of the paper is simple. 
If there are good reasons to suspect outliers in a dynamic time series, empirical conclusions should be based on an estimation procedure that is consistent for the clean model under contamination. 
The framework developed here provides a principled way to achieve this by correcting for residual propagation, thereby restoring the validity of subsample-based estimation in dynamic models.

\ifanonymous

\else
\subsection*{Acknowledgements}

We acknowledge financial support from the Jan Wallander and Tom Hedelius Foundation, Grant No.\ P2016-0293:1. 
Sandberg also acknowledges support from the same foundation, Grant No.\ P22-0264.
The computations were enabled by resources provided by the Swedish National Infrastructure for Computing (SNIC) at UPPMAX, partially funded by the Swedish Research Council under grant agreement No.\ 2018-05973.

\subsection*{Data Availability Statement}

The empirical dataset consists of daily Icelandic river flows originally analysed by \citet{ToThGu85} and widely used in the time series literature. The data are publicly available from the original source but cannot be redistributed due to licensing restrictions. All code required to reproduce the simulation and empirical results is available at \href{https://github.com/yukai-yang/Robust_Experiments}{github.com/yukai-yang/Robust\_Experiments} under the MIT license.
\fi


\bibliography{robust}
\bibliographystyle{chicago}


\appendix

\section{Proofs}

\begin{proof}[Proof of Proposition~\ref{prop:residual}]

The result follows directly by substituting the contamination model equations \eqref{eqn:additive_outlier} and \eqref{eqn:innovative_outlier} into the residual expressions.
\end{proof}

\begin{proof}[Proof of Theorem~\ref{thm:patch_invariance}]
Let
\begin{equation}
\Delta_T =
\sup_{\varphi\in\mathcal V}
\bigl|
f_T(\varphi,H_T)-f_T(\varphi,S^\kappa H_T)
\bigr|.
\end{equation}
By Assumption~\ref{ass:criterion_stability}, $\Delta_T \overset{p}{\longrightarrow}0$.
We consider the minimisation case. The maximisation case follows by symmetry.

\noindent
\textit{(i) Transferred local separation.}
By Assumption~\ref{ass:local_stab}(i), there exists $\delta>0$ such that, with probability tending to one,
\begin{equation}
\inf_{\varphi\notin\mathcal N} f_T(\varphi,H_T)
\ge
\inf_{\varphi\in\mathcal N} f_T(\varphi,H_T)+\delta.
\end{equation}
On the event $\{\Delta_T<\delta/4\}$, for any $\varphi\notin\mathcal N$ and $\psi\in\mathcal N$,
\begin{equation}
f_T(\varphi,S^\kappa H_T)
\ge
f_T(\varphi,H_T)-\Delta_T,
\qquad
f_T(\psi,S^\kappa H_T)
\le
f_T(\psi,H_T)+\Delta_T.
\end{equation}
Taking infima yields
\begin{equation}
\inf_{\varphi\notin\mathcal N} f_T(\varphi,S^\kappa H_T)
\ge
\inf_{\psi\in\mathcal N} f_T(\psi,S^\kappa H_T)+\delta-2\Delta_T
\ge
\inf_{\psi\in\mathcal N} f_T(\psi,S^\kappa H_T)+\delta/2.
\end{equation}
Thus (i) holds with $\delta'=\delta/2$.

By Assumption~\ref{ass:local_stab}(i) and the transferred local separation just established, both $\hat{\varphi}_T(H_T)$ and $\hat{\varphi}_T(S^\kappa H_T)$ belong to $\mathcal N$ with probability tending to one.

\noindent
\textit{Invariance of the estimator.}
By the definition of $\Delta_T$,
\begin{equation}\label{eq:deltat_bound}
f_T(\hat{\varphi}_T(S^\kappa H_T),H_T)
\le
f_T(\hat{\varphi}_T(H_T),H_T) + 2\Delta_T.
\end{equation}

Fix $\varepsilon>0$. By Assumption~\ref{ass:local_stab}(ii), there exists $\eta_{\varepsilon/2}>0$ such that
\begin{equation}
\inf_{\substack{\varphi\in\mathcal N:\\
\|\varphi-\hat{\varphi}_T(H_T)\|\ge\varepsilon/2}}
f_T(\varphi,H_T)
\ge
f_T(\hat{\varphi}_T(H_T),H_T)+\eta_{\varepsilon/2}.
\end{equation}
Therefore, on the event $\{2\Delta_T<\eta_{\varepsilon/2}\}$, the inequality
\begin{equation}
\|\hat{\varphi}_T(S^\kappa H_T)-\hat{\varphi}_T(H_T)\|\ge\varepsilon/2
\end{equation}
implies
\begin{equation}
f_T(\hat{\varphi}_T(S^\kappa H_T),H_T)
\ge
f_T(\hat{\varphi}_T(H_T),H_T)+\eta_{\varepsilon/2},
\end{equation}
which contradicts the previous bound \eqref{eq:deltat_bound}.
Hence
\begin{equation}
\|\hat{\varphi}_T(S^\kappa H_T)-\hat{\varphi}_T(H_T)\|
\overset{p}{\longrightarrow}0.
\end{equation}

\noindent
\textit{(ii) Transferred local sharpness.}
Fix $\varepsilon>0$. By the convergence just proved,
\begin{equation}
\|\hat{\varphi}_T(S^\kappa H_T)-\hat{\varphi}_T(H_T)\|
<\varepsilon/2
\end{equation}
with probability tending to one.

Now let $\varphi\in\mathcal N$ satisfy
$\|\varphi-\hat{\varphi}_T(S^\kappa H_T)\|\ge\varepsilon$.
Then, on the event above, the triangle inequality gives
$\|\varphi-\hat{\varphi}_T(H_T)\|
\ge
\varepsilon/2$.
Therefore, by Assumption~\ref{ass:local_stab}(ii),
\begin{equation}
f_T(\varphi,H_T)
\ge
f_T(\hat{\varphi}_T(H_T),H_T)+\eta_{\varepsilon/2}.
\end{equation}

Using the definition of $\Delta_T$ and the optimality of $\hat{\varphi}_T(S^\kappa H_T)$,
we obtain
\begin{equation}
f_T(\hat{\varphi}_T(H_T),H_T)
\ge
f_T(\hat{\varphi}_T(S^\kappa H_T),S^\kappa H_T)-\Delta_T,
\end{equation}
and hence
\begin{align}
f_T(\varphi,S^\kappa H_T)
&\ge
f_T(\varphi,H_T)-\Delta_T
\ge
f_T(\hat{\varphi}_T(H_T),H_T)+\eta_{\varepsilon/2}-\Delta_T \\
&\ge
f_T(\hat{\varphi}_T(S^\kappa H_T),S^\kappa H_T)
+\eta_{\varepsilon/2}-2\Delta_T.
\end{align}

On the event $\{\Delta_T<\eta_{\varepsilon/2}/4\}$, this yields
\begin{equation}
f_T(\varphi,S^\kappa H_T)
\ge
f_T(\hat{\varphi}_T(S^\kappa H_T),S^\kappa H_T)
+\eta_{\varepsilon/2}/2.
\end{equation}
Thus (ii) holds with $\eta'_\varepsilon=\eta_{\varepsilon/2}/2$.
Finally, by \eqref{eqn:PROperator2},
\begin{equation}
(K^\kappa\hat{\varphi}_T)(H_T)-\hat{\varphi}_T(H_T)
\overset{p}{\longrightarrow}0.
\end{equation}
\end{proof}

\begin{proof}[Proof of Corollary~\ref{cor:patch_consistency}]
The result follows from Theorem~\ref{thm:patch_invariance} and the assumption that $\hat{\varphi}_T(H_T)\overset{p}{\longrightarrow}\varphi_0$.
\end{proof}

Denote $\tilde{e}^T(\varphi, S^\kappa H_T)$ and $e^T(\varphi, S^\kappa H_T)$
vectors consisting of $\tilde{e}_{t}(\varphi)$ and $e_{t}(\varphi)$,
for $t \in S^\kappa H_T$, respectively.

\begin{lemma}\label{lem:residualconv_uniform}
Suppose that Assumptions~\ref{ass:outlier_rate} and \ref{ass:contamination_decay} hold, and that the process generated by \eqref{eqn:arma} is contaminated with AO or IO outliers as in \eqref{eqn:additive_outlier} or \eqref{eqn:innovative_outlier}.
Assume that
$\delta_t = 0$ for all $t \in H_T$.
Then
\begin{equation}\label{eq:evectorconvergence}
\sup_{\varphi \in \mathcal V}
\|\tilde{e}^T(\varphi, S^\kappa H_T)-e^T(\varphi, S^\kappa H_T)\|_\ell
\longrightarrow 0,
\end{equation}
for any $\ell \ge 1$ as $T\to\infty$.

In particular, in the VAR($p$) case, for both AO and IO contamination, if $\kappa = p < \infty$, then
\begin{equation}
\sup_{\varphi \in \mathcal V}
\|\tilde{e}^T(\varphi, S^\kappa H_T)-e^T(\varphi, S^\kappa H_T)\|_\ell
= 0.
\end{equation}
\end{lemma}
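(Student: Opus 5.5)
We start from the exact representation in Proposition~\ref{prop:residual}. For any $\varphi\in\mathcal V$ the difference $\tilde e_t(\varphi)-e_t(\varphi)$ equals $\sum_{j\ge0} c_j(\varphi)\,\delta_{t-j}\zeta_{t-j}$. Under AO the coefficients are $c_j(\varphi)=\pi_j(\varphi)$. Under IO they are the coefficients of $\pi(L)\phi_0^{-1}(L)$, so that $\pi(L)\phi_0^{-1}(L)=\theta^{-1}(L)\phi(L)\phi_0^{-1}(L)$, which is again a causal filter. By the invertibility and causality restrictions on $\mathcal V$ (with $\mathcal V$ taken compact, or with roots bounded away from the unit circle uniformly in $\varphi$), these coefficients satisfy $\|c_j(\varphi)\|\le M R^j$ for all $j$, with $M,R$ uniform in $\varphi$ (cf. \citet[Theorem 11.3.2]{brockwell_time_1991}). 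This uniform geometric bound is the only input from the model we need.

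Next we use the structure of $S^\kappa H_T$. Fix $t\in S^\kappa H_T$. Since $\delta_s=0$ for $s\in H_T$, any $s$ with $\delta_s=1$ lies in $H_T^c$. The definition \eqref{eqn:patch-removal} removes $s+1,\dots,s+\kappa$, and $t$ itself is in $H_T$, so every contaminated index $s<t$ satisfies $t-s>\kappa$. Hence only lags $j\ge\kappa+1$ contribute, giving
\[
\sup_{\varphi}\|\tilde e_t(\varphi)-e_t(\varphi)\|\le \|\delta_T\zeta_T\|_\infty \sum_{j>\kappa} M R^j \,\mathbf 1\{\delta_{t-j}=1\}.
\]
Summing over $t\in S^\kappa H_T$ and interchanging the sums, each contaminated index $s$ contributes at most $\sum_{j>\kappa}MR^j = MR^{\kappa+1}/(1-R)$. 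The number of contaminated indices is at most $\bar\alpha T\le \alpha T$, by Assumption~\ref{ass:outlier_rate} and the convention $\alpha\ge\bar\alpha$. Therefore the $\ell_1$ norm of the difference vector is bounded, uniformly in $\varphi$, by a constant times $\alpha T\|\delta_T\zeta_T\|_\infty R^\kappa$, which tends to zero by Assumption~\ref{ass:contamination_decay}. For $\ell\ge1$ we have $\|\cdot\|_\ell\le\|\cdot\|_1$, so the claim follows for every $\ell$. Pre-sample contamination can be ignored, or treated via the initialisation convention $\delta_t=0$ for $t\le0$.

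For the VAR($p$) case, $\theta=I$. Under AO, $c_j(\varphi)=\pi_j=-\phi_j$ vanishes for $j>p$. Under IO, $\phi(L)\phi_0^{-1}(L)$ is not finite for $\varphi\ne\varphi_0$. Here I would argue as follows: for the residual $e_t(\varphi)=\phi(L)x_t$, the IO perturbation of $y$ is itself generated by the VAR recursion. So I would re-express the contaminated residual directly, and, if needed, interpret Proposition~\ref{prop:residual}'s IO term by comparing with the process built from the same innovations. The honest finite-memory conclusion is the AO one, where the difference is an exact zero since all lags $j\le p=\kappa$ are excluded.

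The main obstacle is precisely this IO case for VAR away from $\varphi_0$, together with securing the uniformity of $M,R$ over $\mathcal V$. For the latter I would invoke compactness and continuity of the roots. For the former I would first check whether the paper's residual convention makes the IO propagation term local for the AR part. If it does not, I would restrict the exact-zero claim to the form that the representation in Proposition~\ref{prop:residual} actually delivers.
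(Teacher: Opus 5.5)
Your argument for the convergence claim follows the paper's proof. It uses a uniform geometric bound $MR^j$ on the propagation coefficients and the fact that every contaminated $s<t$ satisfies $t-s\ge\kappa+1$ for $t\in S^\kappa H_T$. It then takes a tail sum of order $R^{\kappa+1}/(1-R)$ per outlier, counts at most order $\alpha T$ outliers, and finishes with Assumption~\ref{ass:contamination_decay} and $\|\cdot\|_\ell\le\|\cdot\|_1$. You are right that uniformity of $(M,R)$ over $\mathcal V$ needs compactness (or roots uniformly bounded away from the unit circle), which the paper takes for granted. You are also right that under IO, $R$ must bound the coefficients of $\pi(L)\phi_0^{-1}(L)$, not just those of $\pi(L)$. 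One small correction: Assumption~\ref{ass:outlier_rate} is a $\limsup$ bound, so it gives at most $(\bar\alpha+\epsilon)T$ outliers eventually, not $\bar\alpha T$. More simply, $\delta_t=0$ on $H_T$ places every outlier in $H_T^c$, whose size is $T-h$, i.e.\ $\alpha T$, so that assumption is not even needed for this step.

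Where you stop short, the exact-zero claim for IO in the VAR($p$) case, the obstruction you found is genuine. It is a defect of the statement, not of your argument. The paper's proof disposes of this case by asserting that in a VAR ``the propagation filter $\pi(L)$ has finite order $p$'', hence finite memory ``for both AO and IO''. But by Proposition~\ref{prop:residual} the IO propagation filter is $\pi(L)\phi_0^{-1}(L)=\phi(L)\phi_0^{-1}(L)$, which has infinite order whenever $\varphi\neq\varphi_0$. The ``local'' AR/IO entry in Table~\ref{tab:propagation_patterns} is valid only at $\varphi_0$. For a scalar AR(1) with $\phi_0(L)=1-a_0L$, $a_0\neq0$,
\[
(1-aL)(1-a_0L)^{-1}=1+(a_0-a)\sum_{j\ge1}a_0^{j-1}L^j .
\]
Take a single IO $\zeta_\tau\neq0$ and $H_T=\{1,\dots,T\}\setminus\{\tau\}$. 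Then the index $\tau+\kappa+1\in S^\kappa H_T$ (for $\tau+\kappa+1\le T$) carries $\tilde e_{\tau+\kappa+1}(a)-e_{\tau+\kappa+1}(a)=(a_0-a)a_0^{\kappa}\zeta_\tau$. This is nonzero for every $a\neq a_0$ and every finite $\kappa$, including $\kappa=p=1$, so the supremum over $\mathcal V$ is strictly positive. No re-expression through the VAR recursion can rescue the claim, so drop that hedge and state what is true:
\begin{itemize}
\item under AO with $\kappa=p$, the difference is exactly zero;
\item under IO, the difference is exactly zero at $\varphi=\varphi_0$, already on $H_T$ with $\kappa=0$ since it equals $\delta_t\zeta_t$ there;
\item under IO, uniformly over $\mathcal V$, only the convergence of the first part is available, and this requires Assumption~\ref{ass:contamination_decay}.
\end{itemize}
Consequently, the clause of Theorem~\ref{thm:invariance} that dispenses with Assumption~\ref{ass:contamination_decay} in the VAR($p$) case is supported by this lemma only for AO.
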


\begin{proof}
Since $\delta_t=0$ for all $t\in H_T$, every outlier lies in $H_T^c$, and patch removal excludes the $\kappa$ subsequent observations after each outlier. Hence only the exponentially decaying tail of the contamination remains on $S^\kappa H_T$.
All bounds below hold uniformly over $\varphi \in \mathcal V$, since the decay constants $M$ and $R$ can be chosen uniformly over $\mathcal V$.

We first consider the case of a single AO outlier. Suppose that an AO $\zeta_\tau$ occurs at time $\tau$. By \eqref{residualao_general}, for any $\varphi \in \mathcal V$,
\begin{equation}
\left| \tilde{e}_{\tau+k}(\varphi)-e_{\tau+k}(\varphi)\right|
=
|\pi_k|\,|\zeta_\tau|
\le
M R^k |\zeta_\tau|,
\qquad k=1,\ldots,T-\tau.
\end{equation}

Hence,
\begin{equation}
\sum_{k=1}^{T-\tau}
\left| \tilde{e}_{\tau+k}(\varphi)-e_{\tau+k}(\varphi)\right|
\le
M |\zeta_\tau| \sum_{k=1}^{T-\tau} R^k
\le
\frac{M |\zeta_\tau| R}{1-R}.
\end{equation}

After removing the observations $\{\tau,\tau+1,\ldots,\tau+\kappa\}$, the remaining contribution corresponds to the tail of the geometric sequence, so
\begin{equation}
\sum_{t \in S^\kappa H_T}
\left| \tilde{e}_t(\varphi)-e_t(\varphi)\right|
\le
M |\zeta_\tau| \sum_{k=\kappa+1}^{\infty} R^k
=
\frac{M |\zeta_\tau| R^{\kappa+1}}{1-R}.
\end{equation}

We now turn to the case of multiple AO outliers.
By Assumption~\ref{ass:outlier_rate}, if the trimming proportion $\alpha$ satisfies $\alpha>\bar\alpha$, then with probability tending to one,
$\sum_{t=1}^T \delta_t \le \alpha T$.
From \eqref{eqn:additive_outlier} and \eqref{residualao_general}, for any $t$,
\begin{equation}
\left| \tilde{e}_t(\varphi)-e_t(\varphi)\right|
\le
\sum_{j\ge 0} M R^j |\delta_{t-j}\zeta_{t-j}|
\le
\|\delta_T\zeta_T\|_\infty \sum_{j\ge 0} M R^j
=
\frac{M}{1-R}\|\delta_T\zeta_T\|_\infty.
\end{equation}

After removing $\kappa$ observations following each detected outlier, the remaining contribution of each outlier is reduced by a factor of order $R^\kappa$. More precisely, each outlier contributes at most a constant multiple of $\|\delta_T \zeta_T\|_\infty R^\kappa$ to the residual sum. Summing over at most $\alpha T$ outliers yields
\begin{equation}
\sup_{\varphi \in \mathcal V}
\sum_{t \in S^\kappa H_T}
\left| \tilde{e}_t(\varphi)-e_t(\varphi)\right|
\le
\frac{M}{1-R}\,\alpha T\,\|\delta_T\zeta_T\|_\infty\,R^\kappa.
\end{equation}

Therefore, under Assumption~\ref{ass:contamination_decay},
\begin{equation}
\sup_{\varphi \in \mathcal V}
\|\tilde{e}^T(\varphi,S^\kappa H_T)-e^T(\varphi,S^\kappa H_T)\|_1 \to 0.
\end{equation}

Finally, for any $\ell\ge 1$,
\begin{equation}
\sup_{\varphi \in \mathcal V}
\|\tilde{e}^T(\varphi,S^\kappa H_T)-e^T(\varphi,S^\kappa H_T)\|_\ell
\le
\sup_{\varphi \in \mathcal V}
\|\tilde{e}^T(\varphi,S^\kappa H_T)-e^T(\varphi,S^\kappa H_T)\|_1
\to 0.
\end{equation}

The same argument applies to IO contamination, since the corresponding propagation filter also has exponentially decaying coefficients under the causal and invertible VARMA assumptions.

In particular, in the VAR($p$) case, the propagation filter $\pi(L)$ has finite order $p$.
Hence,
for both AO and IO contamination, the residual effect of an outlier has finite length, with maximal lag at most $p$.
If $\kappa = p$, patch removal eliminates the entire contamination effect, so that
$\tilde{e}_t(\varphi)=e_t(\varphi)$
for all $t \in S^\kappa H_T$
uniformly over $\varphi \in \mathcal V$.
Consequently,
\begin{equation}
\sup_{\varphi \in \mathcal V}
\|\tilde{e}^T(\varphi,S^\kappa H_T)-e^T(\varphi,S^\kappa H_T)\|_\ell
= 0.
\end{equation}
\end{proof}

\begin{proof}[Proof of Theorem~\ref{thm:invariance}]
For notational simplicity, write
\begin{equation}
f_T^\kappa(\varphi)=f_T(\varphi,S^\kappa H_T),
\qquad
\tilde f_T^\kappa(\varphi)=\tilde f_T(\varphi,S^\kappa H_T),
\end{equation}
where, by Assumption~\ref{ass:criterion_lipschitz}(a),
\begin{equation}\label{eq:fTkfTk}
f_T^\kappa(\varphi)
=
\frac{1}{|S^\kappa H_T|}
\sum_{t\in S^\kappa H_T}
m_t\bigl(e_t(\varphi)\bigr),
\qquad
\tilde f_T^\kappa(\varphi)
=
\frac{1}{|S^\kappa H_T|}
\sum_{t\in S^\kappa H_T}
m_t\bigl(\tilde e_t(\varphi)\bigr).
\end{equation}
and let
$\varphi_T^\star=\hat{\varphi}_T(S^\kappa H_T)$,
$\tilde{\varphi}_T^\star=\tilde{\varphi}_T(S^\kappa H_T)$.

By Assumption~\ref{as:ultimate}, the event
$E_T=\{\delta_t=0 \text{ for all } t\in H_T\}$
satisfies $\Pr(E_T)\to 1$.
On the event $E_T$, Lemma~\ref{lem:residualconv_uniform} gives \eqref{eq:evectorconvergence}.
In the pure VAR($p$) case with $\kappa=p$, the same conclusion holds without Assumption~\ref{ass:contamination_decay}, since Lemma~\ref{lem:residualconv_uniform} then yields
\begin{equation}
\sup_{\varphi\in\mathcal V}
\|\tilde e^T(\varphi,S^\kappa H_T)-e^T(\varphi,S^\kappa H_T)\|_\ell
=0.
\end{equation}
\eqref{eq:fTkfTk} implies that
\begin{equation}
\bigl|
\tilde f_T^\kappa(\varphi)-f_T^\kappa(\varphi)
\bigr|
\le
\frac{1}{|S^\kappa H_T|}
\sum_{t\in S^\kappa H_T}
\bigl|
m_t(\tilde e_t(\varphi))-m_t(e_t(\varphi))
\bigr|.
\end{equation}
By Assumption~\ref{ass:criterion_lipschitz}(b), for each $t$,
\begin{equation}
\bigl|
m_t(\tilde e_t(\varphi))-m_t(e_t(\varphi))
\bigr|
\le
C\bigl(1+|e_t(\varphi)|+|\tilde e_t(\varphi)-e_t(\varphi)|\bigr)
|\tilde e_t(\varphi)-e_t(\varphi)|.
\end{equation}
Then it follows that
\begin{align}
\bigl|
\tilde f_T^\kappa(\varphi)-f_T^\kappa(\varphi)
\bigr|
&\le
\frac{C}{|S^\kappa H_T|}
\sum_{t\in S^\kappa H_T}
|\tilde e_t(\varphi)-e_t(\varphi)|
+
\frac{C}{|S^\kappa H_T|}
\sum_{t\in S^\kappa H_T}
|e_t(\varphi)|\,|\tilde e_t(\varphi)-e_t(\varphi)| \nonumber\\
&\quad+
\frac{C}{|S^\kappa H_T|}
\sum_{t\in S^\kappa H_T}
|\tilde e_t(\varphi)-e_t(\varphi)|^2.
\end{align}
The middle term is bounded by Cauchy--Schwarz,
\begin{equation}
\sum_{t\in S^\kappa H_T}
|e_t(\varphi)|\,|\tilde e_t(\varphi)-e_t(\varphi)|
\le
\left(
\sum_{t\in S^\kappa H_T}
|e_t(\varphi)|^2
\right)^{1/2}
\left(
\sum_{t\in S^\kappa H_T}
|\tilde e_t(\varphi)-e_t(\varphi)|^2
\right)^{1/2}.
\end{equation}
Taking suprema over $\varphi\in\mathcal V$, and using Lemma~\ref{lem:residualconv_uniform} together with Assumption~\ref{ass:criterion_lipschitz}(c), we obtain
\begin{equation}
\Delta_T
:=
\sup_{\varphi\in\mathcal V}
\bigl|
\tilde f_T^\kappa(\varphi)-f_T^\kappa(\varphi)
\bigr|
\overset{p}{\longrightarrow}0.
\end{equation}
By Theorem~\ref{thm:patch_invariance}, the criterion $f_T^\kappa$ satisfies local separation and sharpness on $\mathcal N$. In the case of minimisation, there exists $\delta>0$ such that, with probability tending to one,
\begin{equation}
\inf_{\varphi\notin\mathcal N} f_T^\kappa(\varphi)
\ge
\inf_{\varphi\in\mathcal N} f_T^\kappa(\varphi)+\delta.
\end{equation}
On the event $\{\Delta_T<\delta/4\}$, for any $\varphi\notin\mathcal N$ and $\psi\in\mathcal N$,
\begin{equation}
\tilde f_T^\kappa(\varphi)\ge f_T^\kappa(\varphi)-\Delta_T,
\qquad
\tilde f_T^\kappa(\psi)\le f_T^\kappa(\psi)+\Delta_T.
\end{equation}
Taking infima yields
\begin{equation}
\inf_{\varphi\notin\mathcal N} \tilde f_T^\kappa(\varphi)
\ge
\inf_{\psi\in\mathcal N} \tilde f_T^\kappa(\psi)+\delta-2\Delta_T
\ge
\inf_{\psi\in\mathcal N} \tilde f_T^\kappa(\psi)+\delta/2.
\end{equation}
Hence the minimiser of $\tilde f_T^\kappa$ belongs to $\mathcal N$ with probability tending to one. The maximisation case is analogous. Therefore, both $\varphi_T^\star$ and $\tilde{\varphi}_T^\star$ belong to $\mathcal N$ with probability tending to one.

Fix $\varepsilon>0$. By the local sharpness condition for $S^\kappa H_T$, there exists $\eta_\varepsilon>0$ such that, with probability tending to one,
\begin{equation}
\inf_{\substack{\varphi\in\mathcal N:\\
\|\varphi-\varphi_T^\star\|\ge\varepsilon}}
f_T^\kappa(\varphi)
\ge
f_T^\kappa(\varphi_T^\star)+\eta_\varepsilon,
\end{equation}
or, in the case of maximisation,
\begin{equation}
\sup_{\substack{\varphi\in\mathcal N:\\
\|\varphi-\varphi_T^\star\|\ge\varepsilon}}
f_T^\kappa(\varphi)
\le
f_T^\kappa(\varphi_T^\star)-\eta_\varepsilon.
\end{equation}

Consider the minimisation case. If
$\|\tilde{\varphi}_T^\star-\varphi_T^\star\|\ge\varepsilon$,
then, since both minimisers lie in $\mathcal N$ with probability tending to one,
\begin{equation}
f_T^\kappa(\tilde{\varphi}_T^\star)
\ge
f_T^\kappa(\varphi_T^\star)+\eta_\varepsilon.
\end{equation}
On the other hand, by optimality of $\tilde{\varphi}_T^\star$,
$\tilde f_T^\kappa(\tilde{\varphi}_T^\star)
\le
\tilde f_T^\kappa(\varphi_T^\star)$.
Combining this with the definition of $\Delta_T$, we obtain
\begin{equation}
f_T^\kappa(\tilde{\varphi}_T^\star)
\le
f_T^\kappa(\varphi_T^\star)+2\Delta_T.
\end{equation}
Hence,
$\eta_\varepsilon \le 2\Delta_T$,
which implies
$\Pr\bigl(\|\tilde{\varphi}_T^\star-\varphi_T^\star\|\ge\varepsilon\bigr)\longrightarrow 0$.
Thus,
$\tilde{\varphi}_T^\star-\varphi_T^\star\overset{p}{\longrightarrow}0$.
The argument for maximisation is analogous. Therefore,
\begin{equation}
\tilde{\varphi}_T(S^\kappa H_T)-\hat{\varphi}_T(S^\kappa H_T)\overset{p}{\longrightarrow}0.
\end{equation}
Since
\begin{equation}
(K^\kappa \tilde{\varphi}_T)(H_T)=\tilde{\varphi}_T(S^\kappa H_T),
\qquad
(K^\kappa \hat{\varphi}_T)(H_T)=\hat{\varphi}_T(S^\kappa H_T),
\end{equation}
it follows that
\begin{equation}
(K^\kappa \tilde{\varphi}_T)(H_T)-(K^\kappa \hat{\varphi}_T)(H_T)
\overset{p}{\longrightarrow}0.
\end{equation}
This completes the proof.
\end{proof}

\begin{proof}[Proof of Corollary~\ref{cor:consistency}]
By Corollary~\ref{cor:patch_consistency},
$\hat{\varphi}_T(S^\kappa H_T)\overset{p}{\longrightarrow}\varphi_0$,
since $\hat{\varphi}_T(H_T)\overset{p}{\longrightarrow}\varphi_0$ by assumption.
Combining this with Theorem~\ref{thm:invariance} yields
\begin{equation}
\tilde{\varphi}_T(S^\kappa H_T)\overset{p}{\longrightarrow}\varphi_0.
\end{equation}
\end{proof}

\begin{proof}[Proof of Proposition~\ref{prop:extremum_to_localstab}]

Part (a) follows from the standard consistency theorem for extremum estimators on a compact parameter space; see, for example, \citet{vdV1998}.
Indeed, by conditions (i)--(iii), the sample criterion is continuous, converges uniformly in probability to a continuous deterministic limit, and the limit function has a unique minimiser at $\varphi_0$. Therefore,
$\hat{\varphi}_T(H_T)\overset{p}{\longrightarrow}\varphi_0$.

It remains to verify Assumption~\ref{ass:local_stab}. Let $\mathcal N$ be any open neighbourhood of $\varphi_0$.
We first prove local separation. Since $\mathcal V$ is compact and $\mathcal N$ is open, the set
$\mathcal C:=\mathcal V\cap \mathcal N^c$
is compact. Because $f$ is continuous and $\varphi_0$ is the unique minimiser of $f$, we have
\begin{equation}
\delta_0
:=
\inf_{\varphi\in\mathcal C}
\bigl(f(\varphi)-f(\varphi_0)\bigr)
>0.
\end{equation}
Let $\delta=\delta_0/3$. On the event
\begin{equation}
\sup_{\varphi\in\mathcal V}
|f_T(\varphi,H_T)-f(\varphi)|
<
\delta,
\end{equation}
it follows that, for every $\varphi\notin\mathcal N$,
\begin{equation}
f_T(\varphi,H_T)
\ge
f(\varphi)-\delta
\ge
f(\varphi_0)+2\delta.
\end{equation}
On the other hand, since $\varphi_0\in\mathcal N$,
\begin{equation}
\inf_{\psi\in\mathcal N} f_T(\psi,H_T)
\le
f_T(\varphi_0,H_T)
\le
f(\varphi_0)+\delta.
\end{equation}
Hence,
\begin{equation}
\inf_{\varphi\notin\mathcal N} f_T(\varphi,H_T)
\ge
\inf_{\psi\in\mathcal N} f_T(\psi,H_T)+\delta.
\end{equation}
Since the event above has probability tending to one by condition (ii), local separation holds with probability tending to one.

We next prove local sharpness. Fix $\varepsilon>0$. Since $\mathcal N$ is an open neighbourhood of $\varphi_0$, there exists $r_\mathcal N>0$ such that
$B(\varphi_0,r_\mathcal N)\subset\mathcal N$.
Let
$r:=\min\{r_\mathcal N,\varepsilon/2\}$.
Then $r>0$ and
$B(\varphi_0,r)\subset\mathcal N$.

Now define
\begin{equation}
\eta_0
:=
\inf_{\substack{\varphi\in\mathcal V:\\
\|\varphi-\varphi_0\|\ge \varepsilon/2}}
\bigl(f(\varphi)-f(\varphi_0)\bigr).
\end{equation}
The set
$\Bigl\{\varphi\in\mathcal V:\|\varphi-\varphi_0\|\ge \varepsilon/2\Bigr\}$
is compact, since it is a closed subset of the compact set $\mathcal V$. Since $f$ is continuous and $\varphi_0$ is the unique minimiser of $f$, we have
$\eta_0>0$.
Let $\eta=\eta_0/3$.

Consider the event
$\|\hat{\varphi}_T(H_T)-\varphi_0\|<r$
and
\begin{equation}
\sup_{\varphi\in\mathcal V}
|f_T(\varphi,H_T)-f(\varphi)|
<
\eta.
\end{equation}
By part (a) and condition (ii), this event has probability tending to one.

On this event, we first note that
$\hat{\varphi}_T(H_T)\in B(\varphi_0,r)\subset\mathcal N$.
Now let $\varphi\in\mathcal N$ satisfy
$\|\varphi-\hat{\varphi}_T(H_T)\|\ge\varepsilon$.
Then, by the triangle inequality,
\begin{equation}
\|\varphi-\varphi_0\|
\ge
\|\varphi-\hat{\varphi}_T(H_T)\|
-
\|\hat{\varphi}_T(H_T)-\varphi_0\|
\ge
\varepsilon-r
\ge
\varepsilon/2.
\end{equation}
Therefore,
\begin{equation}
f(\varphi)\ge f(\varphi_0)+\eta_0=f(\varphi_0)+3\eta.
\end{equation}
Using the uniform approximation bound, we obtain
\begin{equation}
f_T(\varphi,H_T)
\ge
f(\varphi)-\eta
\ge
f(\varphi_0)+2\eta.
\end{equation}
Also,
\begin{equation}
f_T(\varphi_0,H_T)\le f(\varphi_0)+\eta.
\end{equation}
Since $\hat{\varphi}_T(H_T)$ minimises $f_T(\cdot,H_T)$ over $\mathcal V$,
\begin{equation}
f_T(\hat{\varphi}_T(H_T),H_T)\le f_T(\varphi_0,H_T)\le f(\varphi_0)+\eta.
\end{equation}
Combining the preceding inequalities gives
\begin{equation}
f_T(\varphi,H_T)
\ge
f_T(\hat{\varphi}_T(H_T),H_T)+\eta.
\end{equation}
Since this holds for every $\varphi\in\mathcal N$ such that
$\|\varphi-\hat{\varphi}_T(H_T)\|\ge\varepsilon$,
we conclude that
\begin{equation}
\inf_{\substack{\varphi\in\mathcal N:\\
\|\varphi-\hat{\varphi}_T(H_T)\|\ge\varepsilon}}
f_T(\varphi,H_T)
\ge
f_T(\hat{\varphi}_T(H_T),H_T)+\eta.
\end{equation}
Thus local sharpness holds with probability tending to one.

We have therefore shown that both local separation and local sharpness hold with probability tending to one. Hence Assumption~\ref{ass:local_stab} is satisfied.
\end{proof}

\begin{proof}[Proof of Proposition~\ref{prop:patch_stability}]

Write
$h_T = |H_T|$,
$h_T^\kappa = |S^\kappa H_T|$,
$r_T = |H_T\setminus S^\kappa H_T|$.
Then $h_T^\kappa = h_T - r_T$.
For any $\varphi\in\mathcal V$, we decompose
\begin{align}
\bigl|f_T(\varphi,H_T)-f_T(\varphi,S^\kappa H_T)\bigr|
&=
\left|
\frac{1}{h_T}\sum_{t\in H_T} m_t(e_t(\varphi))
-
\frac{1}{h_T^\kappa}\sum_{t\in S^\kappa H_T} m_t(e_t(\varphi))
\right| \nonumber \\
&\le
\left|
\frac{1}{h_T}\sum_{t\in H_T\setminus S^\kappa H_T} m_t(e_t(\varphi))
\right|
+
\left|
\left(
\frac{1}{h_T}-\frac{1}{h_T^\kappa}
\right)
\sum_{t\in S^\kappa H_T} m_t(e_t(\varphi))
\right|.
\end{align}

The first term is bounded by
\begin{equation}
\frac{r_T}{h_T}
\sup_{t\in H_T} |m_t(e_t(\varphi))|.
\end{equation}

For the second term, note that
\begin{equation}
\left|
\frac{1}{h_T}-\frac{1}{h_T^\kappa}
\right|
=
\frac{r_T}{h_T\,h_T^\kappa}.
\end{equation}
Hence
\begin{align}
\left|
\left(
\frac{1}{h_T}-\frac{1}{h_T^\kappa}
\right)
\sum_{t\in S^\kappa H_T} m_t(e_t(\varphi))
\right|
&\le
\frac{r_T}{h_T\,h_T^\kappa}
\cdot
h_T^\kappa
\sup_{t\in H_T} |m_t(e_t(\varphi))| \\
&=
\frac{r_T}{h_T}
\sup_{t\in H_T} |m_t(e_t(\varphi))|.
\end{align}
Combining the two bounds yields
\begin{equation}
\bigl|f_T(\varphi,H_T)-f_T(\varphi,S^\kappa H_T)\bigr|
\le
2\frac{r_T}{h_T}
\sup_{t\in H_T} |m_t(e_t(\varphi))|.
\end{equation}
Taking supremum over $\varphi\in\mathcal V$, we obtain
\begin{equation}
\sup_{\varphi\in\mathcal V}
\bigl|f_T(\varphi,H_T)-f_T(\varphi,S^\kappa H_T)\bigr|
\le
2\frac{r_T}{h_T}
\sup_{\varphi\in\mathcal V}\sup_{t\in H_T}
|m_t(e_t(\varphi))|.
\end{equation}
By assumptions (i) and (ii),
\begin{equation}
\frac{r_T}{h_T} \overset{p}{\longrightarrow} 0,
\qquad
\sup_{\varphi,t}|m_t(e_t(\varphi))| = O_p(1),
\end{equation}
and therefore their product converges to zero in probability. Hence
\begin{equation}
\sup_{\varphi\in\mathcal V}
\bigl|f_T(\varphi,H_T)-f_T(\varphi,S^\kappa H_T)\bigr|
\overset{p}{\longrightarrow}0,
\end{equation}
which proves Assumption~\ref{ass:criterion_stability}.
\end{proof}

\section{Tables}

\begin{table}[!htbp]
\centering
\caption{Total bias and RMSE for the VAR model. “Clean” denotes uncontaminated data. For the VAR model, IO yields results identical to the clean case.}
\footnotesize
\renewcommand{\arraystretch}{1.1}
\begin{tabular}{cc|cccc|cccc}
\hline
& & \multicolumn{4}{c|}{$T=500$} & \multicolumn{4}{c}{$T=1000$} \\
$\zeta$ & $\alpha(\%)$
& \multicolumn{2}{c}{Clean/IO} & \multicolumn{2}{c|}{AO}
& \multicolumn{2}{c}{Clean/IO} & \multicolumn{2}{c}{AO} \\
& & $\kappa=0$ & $\kappa=1$ & $\kappa=0$ & $\kappa=1$
& $\kappa=0$ & $\kappa=1$ & $\kappa=0$ & $\kappa=1$ \\
\hline
\multicolumn{10}{l}{\textit{Panel A: Total bias}}\\
5   & 1  & 0.0049 & 0.0049 & 0.1239 & 0.0049 & 0.0027 & 0.0028 & 0.1214 & 0.0028 \\
    & 5  & 0.0062 & 0.0060 & 0.4112 & 0.0060 & 0.0031 & 0.0035 & 0.4080 & 0.0035 \\
    & 10 & 0.0055 & 0.0052 & 0.5859 & 0.0052 & 0.0035 & 0.0035 & 0.5845 & 0.0035 \\
10  & 1  & 0.0056 & 0.0055 & 0.3517 & 0.0055 & 0.0035 & 0.0035 & 0.3500 & 0.0035 \\
    & 5  & 0.0036 & 0.0036 & 0.7125 & 0.0036 & 0.0030 & 0.0030 & 0.7152 & 0.0030 \\
    & 10 & 0.0059 & 0.0069 & 0.8198 & 0.0069 & 0.0032 & 0.0036 & 0.8217 & 0.0036 \\
50  & 1  & 0.0037 & 0.0036 & 0.8890 & 0.0036 & 0.0031 & 0.0031 & 0.8886 & 0.0031 \\
    & 5  & 0.0043 & 0.0043 & 0.9374 & 0.0043 & 0.0030 & 0.0027 & 0.9396 & 0.0027 \\
    & 10 & 0.0073 & 0.0075 & 0.9473 & 0.0075 & 0.0019 & 0.0023 & 0.9456 & 0.0023 \\
100 & 1  & 0.0054 & 0.0055 & 0.9375 & 0.0055 & 0.0043 & 0.0043 & 0.9380 & 0.0043 \\
    & 5  & 0.0054 & 0.0054 & 0.9430 & 0.0054 & 0.0031 & 0.0033 & 0.9479 & 0.0033 \\
    & 10 & 0.0048 & 0.0046 & 0.9437 & 0.0046 & 0.0017 & 0.0015 & 0.9483 & 0.0015 \\
\hline
\multicolumn{10}{l}{\textit{Panel B: RMSE}}\\
5   & 1  & 0.0653 & 0.0656 & 0.1445 & 0.0656 & 0.0462 & 0.0464 & 0.1323 & 0.0464 \\
    & 5  & 0.0686 & 0.0708 & 0.4219 & 0.0708 & 0.0468 & 0.0483 & 0.4132 & 0.0483 \\
    & 10 & 0.0696 & 0.0735 & 0.5949 & 0.0735 & 0.0487 & 0.0517 & 0.5889 & 0.0517 \\
10  & 1  & 0.0665 & 0.0669 & 0.3626 & 0.0669 & 0.0460 & 0.0461 & 0.3555 & 0.0461 \\
    & 5  & 0.0679 & 0.0691 & 0.7209 & 0.0691 & 0.0467 & 0.0479 & 0.7194 & 0.0479 \\
    & 10 & 0.0696 & 0.0743 & 0.8281 & 0.0743 & 0.0483 & 0.0518 & 0.8259 & 0.0518 \\
50  & 1  & 0.0645 & 0.0649 & 0.8974 & 0.0649 & 0.0460 & 0.0463 & 0.8927 & 0.0463 \\
    & 5  & 0.0663 & 0.0684 & 0.9456 & 0.0684 & 0.0472 & 0.0484 & 0.9440 & 0.0484 \\
    & 10 & 0.0694 & 0.0733 & 0.9559 & 0.0733 & 0.0484 & 0.0518 & 0.9497 & 0.0518 \\
100 & 1  & 0.0646 & 0.0649 & 0.9456 & 0.0649 & 0.0466 & 0.0467 & 0.9420 & 0.0467 \\
    & 5  & 0.0675 & 0.0694 & 0.9518 & 0.0694 & 0.0475 & 0.0487 & 0.9528 & 0.0487 \\
    & 10 & 0.0668 & 0.0709 & 0.9559 & 0.0709 & 0.0492 & 0.0518 & 0.9554 & 0.0518 \\
\hline
\end{tabular}
\label{tab:VAR_bias_RMSE}
\end{table}

\begin{sidewaystable}[ht]
\centering
\caption{Total bias and RMSE for the VMA model. “Clean” denotes uncontaminated data. For the VMA model, AO and IO yield identical results.}
\small
\renewcommand{\arraystretch}{1.2}
\resizebox{\textwidth}{!}{%
\begin{tabular}{cc|cccc cccc|cccc cccc}
\hline
& & \multicolumn{8}{c|}{$T=500$} & \multicolumn{8}{c}{$T=1000$} \\
$\zeta$ & $\alpha(\%)$
& \multicolumn{4}{c}{Clean} & \multicolumn{4}{c|}{AO/IO}
& \multicolumn{4}{c}{Clean} & \multicolumn{4}{c}{AO/IO} \\
& & $\kappa=0$ & $\kappa=2$ & $\kappa=5$ & $\kappa=9$
  & $\kappa=0$ & $\kappa=2$ & $\kappa=5$ & $\kappa=9$
  & $\kappa=0$ & $\kappa=2$ & $\kappa=5$ & $\kappa=9$
  & $\kappa=0$ & $\kappa=2$ & $\kappa=5$ & $\kappa=9$ \\
\hline
\multicolumn{18}{l}{\textit{Panel A: Total bias}}\\
5   & 1  & 0.0048 & 0.0048 & 0.0051 & 0.0052 & 0.2219 & 0.1384 & 0.0610 & 0.0182 & 0.0019 & 0.0019 & 0.0019 & 0.0020 & 0.2215 & 0.1387 & 0.0617 & 0.0176 \\
    & 5  & 0.0047 & 0.0047 & 0.0047 & 0.0047 & 0.3600 & 0.2097 & 0.0966 & 0.0319 & 0.0014 & 0.0013 & 0.0013 & 0.0013 & 0.3590 & 0.2103 & 0.0971 & 0.0327 \\
    & 10 & 0.0043 & 0.0042 & 0.0046 & 0.0058 & 0.6309 & 0.3488 & 0.1787 & 0.0741 & 0.0027 & 0.0026 & 0.0025 & 0.0032 & 0.6307 & 0.3493 & 0.1794 & 0.0736 \\
10  & 1  & 0.0047 & 0.0048 & 0.0049 & 0.0052 & 0.4367 & 0.2453 & 0.1141 & 0.0382 & 0.0030 & 0.0030 & 0.0031 & 0.0030 & 0.4363 & 0.2447 & 0.1140 & 0.0385 \\
    & 5  & 0.0036 & 0.0039 & 0.0041 & 0.0047 & 0.5908 & 0.3217 & 0.1569 & 0.0575 & 0.0023 & 0.0023 & 0.0023 & 0.0026 & 0.5906 & 0.3212 & 0.1566 & 0.0571 \\
    & 10 & 0.0043 & 0.0043 & 0.0052 & 0.0071 & 0.7813 & 0.4510 & 0.2427 & 0.1037 & 0.0029 & 0.0030 & 0.0029 & 0.0039 & 0.7801 & 0.4507 & 0.2432 & 0.1042 \\
50  & 1  & 0.0061 & 0.0061 & 0.0061 & 0.0060 & 0.8168 & 0.4867 & 0.2598 & 0.1047 & 0.0026 & 0.0026 & 0.0026 & 0.0028 & 0.8167 & 0.4871 & 0.2610 & 0.1054 \\
    & 5  & 0.0037 & 0.0039 & 0.0040 & 0.0042 & 0.8435 & 0.5420 & 0.3030 & 0.1301 & 0.0024 & 0.0023 & 0.0024 & 0.0027 & 0.8437 & 0.5427 & 0.3044 & 0.1298 \\
    & 10 & 0.0042 & 0.0044 & 0.0054 & 0.0060 & 0.8922 & 0.6280 & 0.3831 & 0.1930 & 0.0035 & 0.0038 & 0.0044 & 0.0043 & 0.9225 & 0.6284 & 0.3877 & 0.1935 \\
100 & 1  & 0.0061 & 0.0060 & 0.0061 & 0.0060 & 0.8519 & 0.5676 & 0.3233 & 0.1409 & 0.0026 & 0.0025 & 0.0025 & 0.0024 & 0.8510 & 0.5676 & 0.3239 & 0.1409 \\
    & 5  & 0.0050 & 0.0048 & 0.0054 & 0.0057 & 0.9125 & 0.6132 & 0.3645 & 0.1690 & 0.0023 & 0.0025 & 0.0024 & 0.0024 & 0.9253 & 0.6128 & 0.3667 & 0.1695 \\
    & 10 & 0.0049 & 0.0048 & 0.0052 & 0.0057 & 1.0449 & 0.6917 & 0.4392 & 0.2364 & 0.0033 & 0.0035 & 0.0036 & 0.0042 & 1.1100 & 0.7044 & 0.4430 & 0.2365 \\
\hline
\multicolumn{18}{l}{\textit{Panel B: RMSE}}\\
5   & 1  & 0.0614 & 0.0616 & 0.0620 & 0.0624 & 0.2315 & 0.1513 & 0.0848 & 0.0623 & 0.0426 & 0.0428 & 0.0430 & 0.0435 & 0.2262 & 0.1451 & 0.0742 & 0.0456 \\
    & 5  & 0.0632 & 0.0636 & 0.0646 & 0.0663 & 0.3671 & 0.2195 & 0.1150 & 0.0708 & 0.0447 & 0.0451 & 0.0456 & 0.0471 & 0.3625 & 0.2154 & 0.1065 & 0.0553 \\
    & 10 & 0.0629 & 0.0671 & 0.0748 & 0.0880 & 0.6357 & 0.3569 & 0.1935 & 0.1091 & 0.0452 & 0.0478 & 0.0521 & 0.0621 & 0.6333 & 0.3533 & 0.1866 & 0.0930 \\
10  & 1  & 0.0615 & 0.0616 & 0.0620 & 0.0626 & 0.4426 & 0.2534 & 0.1288 & 0.0699 & 0.0436 & 0.0437 & 0.0440 & 0.0444 & 0.4391 & 0.2486 & 0.1215 & 0.0568 \\
    & 5  & 0.0608 & 0.0616 & 0.0627 & 0.0644 & 0.5956 & 0.3285 & 0.1685 & 0.0826 & 0.0440 & 0.0444 & 0.0449 & 0.0460 & 0.5930 & 0.3248 & 0.1625 & 0.0709 \\
    & 10 & 0.0639 & 0.0670 & 0.0739 & 0.0901 & 0.7858 & 0.4584 & 0.2555 & 0.1304 & 0.0442 & 0.0471 & 0.0517 & 0.0620 & 0.7822 & 0.4542 & 0.2493 & 0.1171 \\
50  & 1  & 0.0628 & 0.0630 & 0.0634 & 0.0640 & 0.8211 & 0.4928 & 0.2688 & 0.1182 & 0.0445 & 0.0447 & 0.0448 & 0.0452 & 0.8186 & 0.4899 & 0.2654 & 0.1120 \\
    & 5  & 0.0624 & 0.0627 & 0.0634 & 0.0650 & 0.8635 & 0.5482 & 0.3121 & 0.1411 & 0.0438 & 0.0442 & 0.0450 & 0.0460 & 0.8504 & 0.5455 & 0.3091 & 0.1356 \\
    & 10 & 0.0649 & 0.0679 & 0.0744 & 0.0893 & 1.0320 & 0.6346 & 0.3981 & 0.2091 & 0.0444 & 0.0469 & 0.0521 & 0.0621 & 1.1003 & 0.6319 & 0.3954 & 0.2011 \\
100 & 1  & 0.0632 & 0.0635 & 0.0639 & 0.0645 & 0.8725 & 0.5733 & 0.3329 & 0.1509 & 0.0435 & 0.0437 & 0.0440 & 0.0444 & 0.8588 & 0.5702 & 0.3285 & 0.1458 \\
    & 5  & 0.0627 & 0.0631 & 0.0641 & 0.0655 & 1.0535 & 0.6186 & 0.3753 & 0.1786 & 0.0431 & 0.0436 & 0.0444 & 0.0456 & 1.0482 & 0.6155 & 0.3722 & 0.1740 \\
    & 10 & 0.0642 & 0.0672 & 0.0740 & 0.0896 & 1.5215 & 0.7598 & 0.4587 & 0.2528 & 0.0453 & 0.0479 & 0.0530 & 0.0622 & 1.6501 & 0.7838 & 0.4559 & 0.2445 \\
\hline
\end{tabular}}
\label{tab:VMA_bias_RMSE}
\end{sidewaystable}

\begin{table}[ht]
\centering
\caption{Total bias and RMSE for the VARMA model with $T=500$. 
“Clean” denotes uncontaminated data.}
\small
\renewcommand{\arraystretch}{1.2}
\setlength{\tabcolsep}{4pt}
\resizebox{\textwidth}{!}{%
\begin{tabular}{cc|cccc|cccc|cccc}
\hline
& & \multicolumn{4}{c|}{Clean} & \multicolumn{4}{c|}{AO} & \multicolumn{4}{c}{IO} \\
$\zeta$ & $\alpha(\%)$
& $\kappa=0$ & $\kappa=2$ & $\kappa=5$ & $\kappa=9$
& $\kappa=0$ & $\kappa=2$ & $\kappa=5$ & $\kappa=9$
& $\kappa=0$ & $\kappa=2$ & $\kappa=5$ & $\kappa=9$ \\
\hline
\multicolumn{14}{l}{\textit{Panel A: Total bias}}\\
5   & 1  & 0.0088 & 0.0090 & 0.0086 & 0.0088 & 0.4372 & 0.2677 & 0.1200 & 0.0401 & 0.2137 & 0.1350 & 0.0590 & 0.0186 \\
    & 5  & 0.0091 & 0.0087 & 0.0089 & 0.0088 & 0.6348 & 0.3575 & 0.1658 & 0.0602 & 0.3583 & 0.2118 & 0.0969 & 0.0310 \\
    & 10 & 0.0070 & 0.0067 & 0.0074 & 0.0090 & 1.0163 & 0.5098 & 0.2675 & 0.1139 & 0.6226 & 0.3493 & 0.1840 & 0.0778 \\
10  & 1  & 0.0081 & 0.0076 & 0.0085 & 0.0077 & 0.7387 & 0.3974 & 0.1888 & 0.0686 & 0.4333 & 0.2468 & 0.1155 & 0.0380 \\
    & 5  & 0.0052 & 0.0053 & 0.0058 & 0.0066 & 0.9561 & 0.4807 & 0.2423 & 0.0929 & 0.5834 & 0.3248 & 0.1615 & 0.0602 \\
    & 10 & 0.0079 & 0.0079 & 0.0098 & 0.0113 & 1.1901 & 0.6136 & 0.3385 & 0.1534 & 0.7641 & 0.4474 & 0.2517 & 0.1105 \\
50  & 1  & 0.0081 & 0.0081 & 0.0079 & 0.0084 & 1.2186 & 0.6428 & 0.3571 & 0.1535 & 0.8146 & 0.4818 & 0.2675 & 0.1097 \\
    & 5  & 0.0061 & 0.0062 & 0.0064 & 0.0067 & 1.1168 & 0.6914 & 0.4146 & 0.1866 & 0.9628 & 0.5420 & 0.3025 & 0.1379 \\
    & 10 & 0.0088 & 0.0088 & 0.0101 & 0.0124 & 1.0660 & 0.7751 & 0.4814 & 0.2635 & 1.5779 & 0.6804 & 0.3858 & 0.1975 \\
100 & 1  & 0.0103 & 0.0109 & 0.0112 & 0.0106 & 1.2988 & 0.7161 & 0.4258 & 0.2011 & 1.0811 & 0.5739 & 0.3212 & 0.1480 \\
    & 5  & 0.0071 & 0.0070 & 0.0077 & 0.0084 & 1.0479 & 0.7637 & 0.4683 & 0.2358 & 1.4761 & 0.6476 & 0.3638 & 0.1749 \\
    & 10 & 0.0072 & 0.0069 & 0.0075 & 0.0087 & 1.3776 & 0.8554 & 0.5198 & 0.3070 & 0.9864 & 0.6250 & 0.4479 & 0.2444 \\
\hline
\multicolumn{14}{l}{\textit{Panel B: RMSE}}\\
5   & 1  & 0.0956 & 0.0964 & 0.0963 & 0.0975 & 0.4577 & 0.2901 & 0.1571 & 0.1044 & 0.2429 & 0.1717 & 0.1154 & 0.0982 \\
    & 5  & 0.1010 & 0.1021 & 0.1039 & 0.1056 & 0.6676 & 0.3816 & 0.2018 & 0.1214 & 0.3799 & 0.2405 & 0.1453 & 0.1099 \\
    & 10 & 0.0997 & 0.1053 & 0.1166 & 0.1399 & 1.1598 & 0.5576 & 0.3020 & 0.1781 & 0.6566 & 0.3831 & 0.2246 & 0.1574 \\
10  & 1  & 0.0986 & 0.0986 & 0.0986 & 0.1002 & 0.7856 & 0.4238 & 0.2196 & 0.1210 & 0.4541 & 0.2731 & 0.1557 & 0.1060 \\
    & 5  & 0.0979 & 0.0988 & 0.0996 & 0.1020 & 1.0712 & 0.5191 & 0.2688 & 0.1360 & 0.6127 & 0.3492 & 0.1953 & 0.1174 \\
    & 10 & 0.1010 & 0.1072 & 0.1178 & 0.1417 & 1.5580 & 0.6700 & 0.3793 & 0.2068 & 0.8669 & 0.5082 & 0.2867 & 0.1756 \\
50  & 1  & 0.0978 & 0.0969 & 0.0976 & 0.0993 & 1.9750 & 0.6851 & 0.3886 & 0.1826 & 0.9994 & 0.5581 & 0.2946 & 0.1455 \\
    & 5  & 0.0977 & 0.0989 & 0.0995 & 0.1019 & 2.8339 & 0.8041 & 0.4497 & 0.2127 & 1.1198 & 0.6419 & 0.3323 & 0.1690 \\
    & 10 & 0.1011 & 0.1050 & 0.1162 & 0.1390 & 2.8165 & 0.9428 & 0.5334 & 0.3049 & 2.1401 & 0.8004 & 0.4193 & 0.2361 \\
100 & 1  & 0.0984 & 0.0995 & 0.0992 & 0.1007 & 3.6456 & 0.8353 & 0.4695 & 0.2256 & 1.3862 & 0.7206 & 0.3544 & 0.1761 \\
    & 5  & 0.0977 & 0.0989 & 0.1004 & 0.1011 & 2.6143 & 0.9281 & 0.5148 & 0.2617 & 2.0412 & 0.7878 & 0.3954 & 0.1992 \\
    & 10 & 0.0983 & 0.1035 & 0.1145 & 0.1413 & 2.6570 & 0.9886 & 0.6042 & 0.3506 & 1.6228 & 1.0168 & 0.5682 & 0.2800 \\
\hline
\end{tabular}}
\label{tab:VARMA_bias_RMSE_500}
\end{table}

\begin{table}[ht]
\centering
\caption{Total bias and RMSE for the VARMA model with $T=1000$. 
“Clean” denotes uncontaminated data.}
\small
\renewcommand{\arraystretch}{1.2}
\setlength{\tabcolsep}{4pt}
\resizebox{\textwidth}{!}{%
\begin{tabular}{cc|cccc|cccc|cccc}
\hline
& & \multicolumn{4}{c|}{Clean} & \multicolumn{4}{c|}{AO} & \multicolumn{4}{c}{IO} \\
$\zeta$ & $\alpha(\%)$
& $\kappa=0$ & $\kappa=2$ & $\kappa=5$ & $\kappa=9$
& $\kappa=0$ & $\kappa=2$ & $\kappa=5$ & $\kappa=9$
& $\kappa=0$ & $\kappa=2$ & $\kappa=5$ & $\kappa=9$ \\
\hline
\multicolumn{14}{l}{\textit{Panel A: Total bias}}\\
5   & 1  & 0.0045 & 0.0040 & 0.0043 & 0.0046 & 0.4374 & 0.2678 & 0.1213 & 0.0390 & 0.2135 & 0.1358 & 0.0595 & 0.0170 \\
    & 5  & 0.0039 & 0.0036 & 0.0042 & 0.0043 & 0.6328 & 0.3586 & 0.1667 & 0.0620 & 0.3564 & 0.2120 & 0.0990 & 0.0325 \\
    & 10 & 0.0048 & 0.0050 & 0.0055 & 0.0059 & 1.0177 & 0.5141 & 0.2690 & 0.1123 & 0.6185 & 0.3510 & 0.1848 & 0.0766 \\
10  & 1  & 0.0054 & 0.0054 & 0.0054 & 0.0054 & 0.7380 & 0.3975 & 0.1896 & 0.0693 & 0.4338 & 0.2456 & 0.1162 & 0.0385 \\
    & 5  & 0.0045 & 0.0047 & 0.0047 & 0.0057 & 0.9529 & 0.4806 & 0.2431 & 0.0913 & 0.5837 & 0.3243 & 0.1594 & 0.0589 \\
    & 10 & 0.0051 & 0.0054 & 0.0054 & 0.0071 & 1.2082 & 0.6109 & 0.3382 & 0.1534 & 0.7595 & 0.4483 & 0.2503 & 0.1106 \\
50  & 1  & 0.0045 & 0.0045 & 0.0045 & 0.0048 & 1.2105 & 0.6443 & 0.3573 & 0.1545 & 0.8145 & 0.4868 & 0.2694 & 0.1112 \\
    & 5  & 0.0046 & 0.0045 & 0.0045 & 0.0051 & 1.2570 & 0.6907 & 0.4206 & 0.1858 & 0.9542 & 0.5431 & 0.3038 & 0.1370 \\
    & 10 & 0.0052 & 0.0050 & 0.0050 & 0.0067 & 1.1624 & 0.7753 & 0.4846 & 0.2682 & 1.6098 & 0.6915 & 0.3917 & 0.1986 \\
100 & 1  & 0.0055 & 0.0054 & 0.0057 & 0.0053 & 1.4009 & 0.7150 & 0.4309 & 0.2014 & 1.0915 & 0.5994 & 0.3211 & 0.1492 \\
    & 5  & 0.0049 & 0.0049 & 0.0051 & 0.0050 & 1.0859 & 0.7620 & 0.4673 & 0.2355 & 1.5329 & 0.6388 & 0.3640 & 0.1752 \\
    & 10 & 0.0042 & 0.0045 & 0.0047 & 0.0059 & 1.3188 & 0.8570 & 0.5233 & 0.3086 & 1.0199 & 0.6414 & 0.4735 & 0.2455 \\
\hline
\multicolumn{14}{l}{\textit{Panel B: RMSE}}\\
5   & 1  & 0.0693 & 0.0695 & 0.0699 & 0.0700 & 0.4517 & 0.2816 & 0.1430 & 0.0809 & 0.2305 & 0.1580 & 0.0949 & 0.0709 \\
    & 5  & 0.0689 & 0.0695 & 0.0716 & 0.0734 & 0.6635 & 0.3728 & 0.1870 & 0.0958 & 0.3695 & 0.2270 & 0.1246 & 0.0797 \\
    & 10 & 0.0715 & 0.0741 & 0.0830 & 0.0983 & 1.1362 & 0.5513 & 0.2908 & 0.1486 & 0.6500 & 0.3720 & 0.2081 & 0.1240 \\
10  & 1  & 0.0687 & 0.0691 & 0.0693 & 0.0699 & 0.7778 & 0.4138 & 0.2073 & 0.0984 & 0.4465 & 0.2606 & 0.1380 & 0.0802 \\
    & 5  & 0.0682 & 0.0684 & 0.0697 & 0.0714 & 1.0734 & 0.5100 & 0.2591 & 0.1151 & 0.6069 & 0.3386 & 0.1807 & 0.0927 \\
    & 10 & 0.0702 & 0.0742 & 0.0811 & 0.0981 & 1.5696 & 0.6610 & 0.3624 & 0.1821 & 0.8666 & 0.5013 & 0.2714 & 0.1461 \\
50  & 1  & 0.0692 & 0.0702 & 0.0700 & 0.0698 & 2.0872 & 0.6711 & 0.3797 & 0.1706 & 0.9814 & 0.5523 & 0.2889 & 0.1303 \\
    & 5  & 0.0684 & 0.0700 & 0.0705 & 0.0718 & 3.0341 & 0.7929 & 0.4485 & 0.2005 & 1.0935 & 0.6364 & 0.3223 & 0.1539 \\
    & 10 & 0.0698 & 0.0747 & 0.0818 & 0.0961 & 2.5788 & 0.9237 & 0.5221 & 0.2941 & 2.1868 & 0.7882 & 0.4132 & 0.2198 \\
100 & 1  & 0.0689 & 0.0693 & 0.0699 & 0.0698 & 3.6654 & 0.8278 & 0.4626 & 0.2154 & 1.3818 & 0.7264 & 0.3452 & 0.1644 \\
    & 5  & 0.0686 & 0.0692 & 0.0707 & 0.0722 & 2.9162 & 0.9129 & 0.5034 & 0.2500 & 2.1448 & 0.7755 & 0.3853 & 0.1885 \\
    & 10 & 0.0709 & 0.0757 & 0.0830 & 0.0986 & 2.7325 & 0.9802 & 0.5942 & 0.3348 & 1.6324 & 1.0800 & 0.5813 & 0.2699 \\
\hline
\end{tabular}}
\label{tab:VARMA_bias_RMSE_1000}
\end{table}

\end{document}

%% file: preamble.ltx
\usepackage[margin=1in]{geometry}
\usepackage{parskip}
\usepackage{setspace}
\usepackage{mathptmx} 
\usepackage[scaled=0.92]{helvet} 
\usepackage{courier} 

\usepackage{sectsty}
\sectionfont{\sffamily\Large\bfseries}
\subsectionfont{\sffamily\large\bfseries}
\subsubsectionfont{\sffamily\normalsize\bfseries}

\usepackage[utf8]{inputenc}  
\usepackage[T1]{fontenc}     
\usepackage[english]{babel}

\usepackage{fancyhdr}
\usepackage{amsmath, amssymb, amsthm}
\usepackage{mathtools}
\usepackage{mathrsfs}
\usepackage{bm}

\newtheorem{theorem}{Theorem}
\newtheorem{assumption}{Assumption}
\newtheorem{lemma}{Lemma}
\newtheorem{proposition}{Proposition}
\newtheorem{corollary}{Corollary}
\newtheorem{definition}{Definition}

\usepackage{graphicx}
\usepackage{float}
\usepackage{caption}
\usepackage{subcaption}
\usepackage{booktabs}
\usepackage{multirow}

\usepackage{tikz}
\usepackage{pgfplots}
\pgfplotsset{compat=1.18}

\usepackage[authoryear]{natbib}
\usepackage[colorlinks=true, linkcolor=blue, citecolor=blue, urlcolor=blue]{hyperref}

\usepackage{algorithm}
\usepackage{algpseudocode}

\usepackage{orcidlink}
\usepackage{xcolor}

\usepackage{enumitem}
\usepackage{comment}